\title{Kolmogorov Complexity and The Garden of Eden Theorem}
\author{Andrey Alpeev}\thanks{This research is supported by the Chebyshev Laboratory  (Department of Mathematics and Mechanics, St. Petersburg State University)  under RF Government grant 11.G34.31.0026}
\address{Chebyshev Laboratory, St. Petersburg State University, 14th Line, 29b, Saint Petersburg, 199178 Russia}
\newtheorem{prop}{Proposition}
\newtheorem{deff}{Definition}
\newtheorem{theor}{Theorem}
\newcommand{\dom}{\mathop{dom}}
\newcommand{\cont}{\mathop{cont}}
\begin{document}
%\newpage
\begin{abstract}
Suppose $\tau$ is a cellular automaton over an amenable group and a finite alphabet. Celebrated Garden of Eden theorem states, that pre-injectivity of $\tau$ is equivalent to non-existence of Garden of Eden configuration. In this paper we will prove, that imposing some mild restrictions , we could add another equivalent assertion: non-existence of Garden of Eden configuration is equivalent to preservation of asymptotic Kolmogorov complexity under the action of cellular automaton.  It yields a characterisation of the cellular automata, which preserve the asymptotic Kolmogorov complexity.
\end{abstract}
\maketitle
keywords: Kolmogorov complexity, cellular automata, garden of Eden theorem 
\section{Introduction}

Suppose $G$ is an amenable group and $A$ is a finite set, called an alphabet. 
Suppose we fixed some F\o lner sequence $\lbrace F_n \rbrace$. 
For an infinite element $x \in A^G$ in the discourse of the Kolmogorov complexity it is very natural to consider some kind of mean information along the F\o lner sequence:
\[
 hc(c)=\limsup_{n \to \infty} \frac{C(x{\vert}_{F_n})}{\lvert F_n \rvert}
\]

 In this way we should make some assertions, to guarantee well behaviour of the defined quantity. If we are working with a group like $\mathbb{Z}^d$ we do not care about computability. Passing to more general case we have to explicitly require our group to be computable. Then, there is a choice of an object to measure the complexity: the string, obtaining by listing of letters, marking elements of given $F_n$ in some order, or the whole object: the map from $F_n$ to $A$. In the second case we get an additional information from the configuration of $F_n$. To avoid such troubles, we introduce the notion of the modest F\o lner sequence, that is such F\o lner sequence, that complexities of its elements are asymptotically negligible to their sizes: $C(F_n)=o(\lvert F_n \rvert)$. This will help us establish equivalence of the two definitions of the asymptotic Kolmogorov complexity.

It is not hard to see, that asymptotic Kolmogorov complexity do not increases under the action of cellular automaton(proposition \ref{asymptotic complexity properties}). The point of interest is to describe those cellular automata, which preserves asymptotic complexity in terms of some well-known properties of cellular automata. The first result of such kind, known to author, contained in the paper \cite{CRI}, there proved, that every invertible cellular automaton over $\mathbb{Z}$ with finite alphabet preserves the asymptotic Kolmogorov complexity, which is defined in the following way:
\[
 \limsup_{i\to \infty} \frac{C(x_i)}{2 i + 1}  
\]
there $C(x_i)$ is the Kolmogorov complexity of the restriction of  $x$ to the segment $[-i,i]$.

The Garden of Eden theorem states, that for the cellular automaton with finite alphabet over an amenable group surjectivity(that is non-existence of Garden of Eden configurations) and pre-injectivity are equivalent(cellular automata is called pre-injective, if every two different configurations with same images under action of cellular automaton, differ on infinite set). We will prove in this paper, that the class of cellular automata, which preserve asymptotic Kolmogorov complexity is exactly the class of surjective and pre-injective cellular automata.  

This paper organized in the following way. In section \ref{computable amenable groups} we discuss the notion of a computable group and prove existence of a computable F\o lner sequences in computable groups. In section \ref{kolmogorov complexity} we remind the notion of the Kolmogorov complexity and its basic properties. Also, there is proved a crucial proposition \ref{complexity and size}, which equip us with the lower bounds of the preimages of a computable functions. In section \ref{cellular automata} we remind some definitions, concerning cellular automata. In section \ref{entropy and complexity} we discuss properties of entropy and asymptotic Kolmogorov complexity. At first, we give the definition of the modest F\o lner sequence, Then, we remind the definition of the entropy and define the asymptotic Kolmogorov complexity. After, we prove the basic properties of the asymptotic complexity.
In proposition \ref{asymptotic complexity properties} we prove, that the entropy of effectively closed set bounds the asymptotic complexity of its elements.
After it we remind the Garden of Eden theorem for amenable groups and Curtis-Hedlund-Lyndon theorem. In the end of this section we give the proof of the main result, theorem \ref{main}, which states, that asymptotic Kolmogorov complexity along the modest F \o lner sequence is presserved by the action of the cellular automaton over the computable amenable group if and only if this automaton is preinjective. 

\section{Computable amenable groups}\label{computable amenable groups}

\begin{deff}
 A pair of a countable or finite set and a bijection of it with an enumerable subset of the natural numbers called a constructible set. 
\end{deff}

\begin{deff}
 Let $G$ be some some group structure on $\mathbb{N}$, such that $0$ is the identity element. Then, if function $(x,y)\mapsto x \circ y$ is computable, we will call the group $G$ computable.
\end{deff}

It is obvious, that the operation $x \mapsto x^{-1}$ in computable group is computable function.

Throughout this paper we will assume, that $G$ is computable, if not, we could enhance our computability class by a suitable oracle.

\begin{deff}
Suppose $G$ is a countable group. We call a sequence $\lbrace F_n \rbrace$ of its finite subsets a F\o lner sequence for this group, if for every $g \in G$
\[
\lim\limits_{n\to \infty}\frac{\lvert F_n g\Delta F_n \rvert}{\lvert F_n\rvert} = 0
\]
there $\Delta$ denotes the symmetric difference. 
\end{deff}

\begin{deff}
We call a countable group amenable, if it has a F\o lner sequence.
\end{deff}

\begin{prop}\label{folner reformulated}
Suppose, $G$ is a countable amenable group, $\lbrace F_n \rbrace$ --- its F\o lner sequence. Then for every finite subset $S$ of $G$ we have
\[
\lim\limits_{n\to \infty}\frac{\lvert F_n S\Delta F_n \rvert}{\lvert F_n\rvert} = 0
\]
\end{prop}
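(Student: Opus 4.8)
The plan is to reduce the statement for a finite set $S$ to the single-element case supplied by the definition of a Følner sequence, by a union bound. Write $S = \lbrace s_1, \dots, s_k \rbrace$; we may assume $S$ is nonempty, since otherwise there is nothing (or nothing true) to prove. The key observation is the purely set-theoretic inclusion
\[
F_n S \,\Delta\, F_n \ \subseteq\ \bigcup_{i=1}^{k} \bigl( F_n s_i \,\Delta\, F_n \bigr).
\]

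To justify this inclusion I would treat the two halves of the symmetric difference separately. For the ``new'' points: if $x \in F_n S \setminus F_n$, then $x \in F_n s_i$ for some $i$, and since $x \notin F_n$ we get $x \in F_n s_i \setminus F_n \subseteq F_n s_i \,\Delta\, F_n$. For the ``lost'' points: if $x \in F_n \setminus F_n S$, then $x \notin F_n s_i$ for every $i$ (in particular for $i=1$), so $x \in F_n \setminus F_n s_1 \subseteq F_n s_1 \,\Delta\, F_n$. Hence every element of $F_n S \,\Delta\, F_n$ lies in some $F_n s_i \,\Delta\, F_n$.

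From the inclusion we obtain the cardinality estimate $\lvert F_n S \,\Delta\, F_n \rvert \le \sum_{i=1}^{k} \lvert F_n s_i \,\Delta\, F_n \rvert$, and therefore
\[
\frac{\lvert F_n S \,\Delta\, F_n \rvert}{\lvert F_n \rvert} \ \le\ \sum_{i=1}^{k} \frac{\lvert F_n s_i \,\Delta\, F_n \rvert}{\lvert F_n \rvert}.
\]
By the definition of a Følner sequence applied to each $g = s_i$, every summand on the right tends to $0$ as $n \to \infty$, and a finite sum of null sequences is a null sequence; this yields the claim.

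There is no substantial obstacle here: the proposition is essentially a bookkeeping argument. The only point that needs a moment of care is getting the set inclusion right, specifically for the ``lost'' points $F_n \setminus F_n S$ --- one must observe that being outside the union $F_n S$ forces a point to be outside each individual translate $F_n s_i$, so a single fixed index already suffices to locate such a point on the right-hand side, and one does not need the identity to belong to $S$.
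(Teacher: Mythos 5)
Your proof is correct and follows essentially the same route as the paper's: decompose $F_n S \Delta F_n$ into the ``new'' points (covered by $\bigcup_i (F_n s_i \setminus F_n)$) and the ``lost'' points (covered by $F_n \setminus F_n s_1$ for a single fixed index), then apply a union bound and the single-element F\o lner condition. Your version of the inclusion is in fact slightly cleaner than the paper's (everything is absorbed into $\bigcup_{i=1}^{k} (F_n s_i \Delta F_n)$ directly), and your remark that $S$ must be nonempty is a legitimate point the paper leaves implicit.
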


\begin{proof}
Since $S=\lbrace s_1, \ldots, s_k\rbrace$, we have
\begin{multline*}
F_n S \Delta F_n  = \bigcup_{i=1}^{k}{(F_n s_i \setminus F_n)} \cup \bigcap_{j=1}^{k}{(F_n \setminus F_n s_j)} \subset \bigcup_{i=1}^{k}{(F_n s_i \Delta F_n)} \cup {(F_n \setminus F_n s_1)} \subset \\ \subset \bigcup_{i=1}^{k}{(F_n s_i \Delta F_n)} \cup {(F_n \Delta F_n s_1)} 
\end{multline*}
From which we could infer, using the definition of F\o lner sequence
\[
\lim_{n\to \infty} \frac{\lvert F_n S\Delta F_n \rvert}{\lvert F_n\rvert} \leq \lim_{n \to \infty}\frac{\sum_{i=1}^{k}{\lvert F_n s_i \Delta F_n \rvert} + \lvert F_n \Delta F_n s_1\rvert}{\lvert F_n \rvert} = 0
\] 
\end{proof}

\begin{prop}
Suppose $G$ is an infinite countable amenable group, and $\lbrace F_n \rbrace$ is a F\o lner sequence on it, then $\lvert F_n\rvert \to \infty$
\end{prop}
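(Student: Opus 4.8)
The plan is to argue by contradiction. Suppose $\lvert F_n\rvert$ does not tend to infinity. Then there is an $M$ and an infinite set of indices $n$ with $\lvert F_n\rvert\le M$, and since $\lvert F_n\rvert$ takes only the finitely many values $0,1,\dots,M$ on this set, some value $m$ is attained infinitely often. Passing to the corresponding subsequence (which is again a F\o lner sequence, so Proposition \ref{folner reformulated} still applies to it), I may assume $\lvert F_n\rvert=m$ for all $n$. Here $m\ge 1$, since the members of a F\o lner sequence are nonempty --- otherwise the defining ratio $\lvert F_n g\,\Delta\,F_n\rvert/\lvert F_n\rvert$ is not even well defined.

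Next I use that $G$ is infinite to pick a finite subset $S\subseteq G$ with $\lvert S\rvert=m+1$. Applying Proposition \ref{folner reformulated} to this $S$ and dividing by the now-constant value $\lvert F_n\rvert=m$, I obtain $\lvert F_n S\,\Delta\,F_n\rvert\to 0$. Since this is a sequence of non-negative integers, it is eventually zero, so $F_n S\subseteq F_n$ for all sufficiently large $n$.

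Finally, fix such an $n$ and choose any $g_0\in F_n$, which is possible because $\lvert F_n\rvert=m\ge 1$. For every $s\in S$ we have $g_0 s\in F_n s\subseteq F_n S\subseteq F_n$, hence $g_0 S\subseteq F_n$; but left multiplication by $g_0$ is injective, so $\lvert g_0 S\rvert=\lvert S\rvert=m+1>m=\lvert F_n\rvert$, a contradiction. Therefore $\lvert F_n\rvert\to\infty$. The only thing requiring a little care is the initial passage to a subsequence on which $\lvert F_n\rvert$ is constant and the remark that $F_n\neq\emptyset$; once $\lvert S\rvert>\lvert F_n\rvert$ is arranged, the conclusion is a one-line pigeonhole argument, and choosing $\lvert S\rvert=m+1$ avoids having to distinguish between groups with an element of infinite order and torsion groups with arbitrarily large finite subgroups.
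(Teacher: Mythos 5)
Your proof is correct and follows essentially the same route as the paper: argue by contradiction, pass to a subsequence of bounded cardinality, choose a finite set $S$ with $\lvert S\rvert$ exceeding that bound (possible since $G$ is infinite), and contradict Proposition \ref{folner reformulated}. The only cosmetic difference is that the paper lower-bounds the ratio $\lvert F_{n_i}S\,\Delta\,F_{n_i}\rvert/\lvert F_{n_i}\rvert$ directly by $1$ using $\lvert F_{n_i}S\rvert\ge\lvert S\rvert\ge 2c$, whereas you drive the symmetric difference all the way to zero and then apply the pigeonhole to $g_0S\subseteq F_n$; both are fine.
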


\begin{proof}
Suppose, on the contrary, for some constant $c$ we have infinitely many indices $i$, such that $\lvert F_i\rvert \leq c$. Then, passing to subsequence, we will get a F\o lner sequence $\lbrace F_{n_i}\rbrace$ with $\lvert F_{n_k}\rvert \leq c$. Take some finite subset $S$ of $G$ with $\lvert S \rvert \geq 2 c$ we get a contradiction with the proposition \ref{folner reformulated}:
\[
\frac{\lvert F_{n_i} S \Delta F_{n_i} \rvert}{\lvert F_{n_i}\rvert}\geq 1
\]   
for every $i$.
\end{proof}

%It is not hard to see, that there exists F\o lner sequence with $\lvert F_n \rvert \geq n$

Every finite subset of the natural numbers could be computably encoded by the natural numbers, so, they forms the constructible set. It is clear, that set-theoretic operations with the finite subsets are computable, and function $(A,B) \mapsto AB$ there $A$ and $B$ are the finite subsets of computable group is computable too. 

\begin{prop}\label{existence of folner}
 There exists a computable F\o lner sequence , with $\lvert F_n\rvert \geq n$ in every computable amenable group.  
\end{prop}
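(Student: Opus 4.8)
The plan is a standard ``unbounded search with a guaranteed halt'' argument. First I would fix, once and for all, a computable enumeration $E_0, E_1, E_2, \ldots$ of all finite subsets of $G$; this is available because the finite subsets of $G$ form a constructible set and the set-theoretic operations together with $(A,B)\mapsto AB$ are computable, as noted above. I would also introduce the exhausting family $S_k = \{0,1,\ldots,k\} \subseteq \mathbb{N} = G$, so that every element $g \in G$ lies in $S_k$ for all $k \geq g$.

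The heart of the matter is the following existence statement, which I would establish first: for every $k \in \mathbb{N}$ and every $N \in \mathbb{N}$ there is a finite set $F \subseteq G$ with $\lvert F\rvert \geq N$ and $\frac{\lvert Fs \Delta F\rvert}{\lvert F\rvert} < \frac{1}{k}$ for all $s \in S_k$. To see this, take any F\o lner sequence $\{F'_m\}$ of $G$, which exists since $G$ is amenable. Since $G$ is a group structure on all of $\mathbb{N}$, it is infinite, so the preceding proposition gives $\lvert F'_m\rvert \to \infty$, while the F\o lner property gives $\frac{\lvert F'_m s \Delta F'_m\rvert}{\lvert F'_m\rvert} \to 0$ for each of the finitely many $s \in S_k$; picking $m$ large enough to meet all of these finitely many constraints at once yields the desired $F = F'_m$.

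Now I would describe the algorithm computing $F_n$: run through $E_0, E_1, E_2, \ldots$ and output the first $E_i$ satisfying both $\lvert E_i\rvert \geq n$ and $\frac{\lvert E_i s \Delta E_i\rvert}{\lvert E_i\rvert} < \frac{1}{n}$ for every $s \in S_n$. Each such test is decidable, since it involves only the finitely many elements of $S_n$ together with computable finite-set and group operations, and the existence statement above (applied with $k = N = n$) guarantees some $E_i$ passes it, so the search halts and $n \mapsto F_n$ is a total computable function. By construction $\lvert F_n\rvert \geq n$, and for any fixed $g \in G$ we have $g \in S_n$ whenever $n \geq g$, hence $\frac{\lvert F_n g \Delta F_n\rvert}{\lvert F_n\rvert} < \frac{1}{n}$ for all large $n$, which is the F\o lner condition.

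The only non-constructive ingredient is the appeal to amenability to know the search terminates; the sets $F_n$ themselves are produced by an honest algorithm. I do not expect a real obstacle here, only bookkeeping: one must check that the searched-for predicate is genuinely decidable (which rests on the computability of the group operation, hence of $x \mapsto x^{-1}$ and of $(A,B)\mapsto AB$ on finite subsets) and use the fact that $G$ has underlying set $\mathbb{N}$ both to exhaust $G$ by the $S_k$ and to invoke $\lvert F'_m\rvert \to \infty$.
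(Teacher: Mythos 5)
Your proposal is correct and follows essentially the same route as the paper: enumerate finite subsets and take the first one of size at least $n$ that is $\frac{1}{n}$-invariant under the elements $0,\ldots,n$. You simply spell out the termination and decidability checks that the paper leaves implicit.
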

\begin{proof}
 Fix some computable enumeration of the finite subsets. Let $F_n$ be the first set, such that \[\frac{\lvert F_n i\Delta F_n \rvert }{\lvert F_n\rvert}\leq \frac{1}{n} \text{ for every
i, }0\leq i\leq n \] and $\lvert F_n \rvert \geq n$.

${F_n}$ is a desired computable F\o lner sequence.
\end{proof}

\section{Kolmogorov complexity}\label{kolmogorov complexity}

We will use the notation $C(x)$ for the plain complexity and $C(x \vert y)$ for the conditional complexity, for the definition see \cite{LV08}, definition 2.1.2, p. 106 . Plain complexity of the pair $(x,y)$ is denoted as $C(x,y)$, for the definition see \cite{LV08}, example 2.1.5, p. 109. 

\begin{prop}\label{complexity properties}
For the plain Kolmogorov complexity the following statements hold:
\begin{enumerate}
 \item $C(x) \leq \log x + O(1)$
 \item if $f$ is computable, then $C(f(x)) \leq C(x) + O(1)$
 \item There exist a constant $c$, such that $\left\lvert \lbrace x \vert C(x) \leq n\rbrace \right\rvert \leq c 2^n$
 \item $C(x,y) = C(y,x)+ O(1)$
 \item $C(x,y) \leq C(x) + C(y) + 2 \log \min(C(x),C(y)) + O(1)$
 \item $C(x|y) \leq C(x) + O(1)$
 \item $C(x,y) = C(x) + C(y\vert x) + O(\max(\log C(y),\log C(x)))$
\end{enumerate}

\end{prop}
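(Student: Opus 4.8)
The plan is to obtain items (1)--(4) and (6) from the universality of the optimal (conditional) decompressor, and to handle (5) and (7) by prefixing self-delimiting length information, the latter additionally requiring an enumeration-and-counting argument. For (1), the program that simply echoes its input, prefixed by its fixed echo routine, describes $x$ in $\log x + O(1)$ bits. For (2), if $p$ is an optimal program for $x$ and $M$ a fixed machine computing $f$, then ``run $p$, apply $M$'' describes $f(x)$ and has length $C(x) + O(1)$; item (4) is (2) applied to the coordinate swap, in both directions. For (3), there are fewer than $2^{n+1}$ binary strings of length at most $n$ and the optimal decompressor is a partial function, so at most $2^{n+1}$ objects have complexity $\le n$ and $c = 2$ works. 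For (6), an optimal unconditional program for $x$ also serves as a conditional program that ignores the auxiliary input $y$, up to the $O(1)$ cost of passing between the two universal machines.

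For (5), concatenating optimal programs $p$ for $x$ and $q$ for $y$ fails because the decompressor cannot locate the boundary; prepending the number $\min(C(x),C(y))$ in the standard ``doubled bits'' self-delimiting code, of length $2\log\min(C(x),C(y)) + O(1)$, lets the decompressor read off the length of the shorter program, simulate it, then simulate the remainder, giving the stated bound. The inequality $C(x,y) \le C(x) + C(y\vert x) + O(\log)$ in (7) follows the same way: a self-delimiting encoding of $C(x)$, then an optimal program for $x$, then an optimal program for $y$ given $x$.

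The substantial item is the reverse inequality in (7) --- the Kolmogorov--Levin symmetry of information --- and this is the main obstacle. Fixing $x$, one enumerates all $z$ with $C(x,z) \le C(x,y)$; if there are at most $2^k$ of them, then $y$ is specified by its index in this enumeration together with the numbers $C(x,y)$ and $k$, so $C(y\vert x) \le k + O(\log C(x,y))$; on the other hand, summing $2^k$ over all candidate $x$'s and counting against the at most $2^{C(x,y)+1}$ pairs of complexity $\le C(x,y)$ forces $C(x) \le C(x,y) - k + O(\log C(x,y))$. Adding the two estimates yields $C(x) + C(y\vert x) \le C(x,y) + O(\log C(x,y))$, and combining $\log C(x,y)$ with $\log C(x)$ and $\log C(y)$ via (5) puts the error term in the claimed form; the remainder is bookkeeping of logarithmic terms. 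All of these facts are classical and are recorded in \cite{LV08}; we rehearse the arguments only to fix the precise shape of the error terms used in the sequel.
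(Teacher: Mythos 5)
Your proposal is correct: the arguments you sketch for items (1)--(7), including the Kolmogorov--Levin enumeration-and-counting argument for the hard direction of the symmetry of information in (7) and the self-delimiting length prefix for (5), are precisely the standard proofs. The paper itself gives no argument here and simply cites \cite{LV08} (Theorem 2.1.2, Example 2.1.5, Exercise 2.1.5, Theorem 2.8.2), which records exactly these proofs, so your route and the paper's coincide.
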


\begin{proof}
For the proofs, see \cite{LV08}:\\
(1) is theorem 2.1.2 on the page 108, \\
(2) and (3) are simple consequences of the definition of plain complexity,\\
(4) is follows from (2) and definition of complexity of pair,\\
(5) is proved in example 2.1.5 on the page 109, \\
(6) is from exercise 2.1.5 on page 113,\\
(7) is a slight reformulation of the theorem 2.8.2 on the page 190.  

\end{proof}

\begin{prop}\label{inverse complexity}
 Suppose, $f$ is a computable function. If $x=f(y)$, then
\[
 \log \lvert f^{-1}(x) \rvert  \geq C(y) - C(x) - O(\log \lvert C(y) - C(x) \rvert)
\]
\end{prop}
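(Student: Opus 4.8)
The plan is to exhibit a short program that, given $x$, enumerates the finite set $f^{-1}(x)$ and picks out $y$ by its index in that enumeration. Since $f$ is computable, the set $f^{-1}(x)$ is computably enumerable uniformly in $x$, so once we know $x$ and the cardinality $N = \lvert f^{-1}(x) \rvert$ we can run $f$ on inputs in a fixed computable order, collect the preimages of $x$ until we have all $N$ of them, and output the $i$-th one found. This gives a description of $y$ from the data $(x, N, i)$ where $i < N$ is the position of $y$ in the enumeration; the index $i$ takes at most $\log N = \log \lvert f^{-1}(x) \rvert$ bits. First I would write this as $C(y \mid x, N) \leq \log \lvert f^{-1}(x) \rvert + O(1)$.

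Next I would unwind this into an unconditional bound. Using property (7) of Proposition \ref{complexity properties}, $C(y) \leq C(x, N, i) + O(\text{small})$, and then by subadditivity (properties (5) and (6)) $C(x, N, i) \leq C(x) + C(N) + C(i \mid \dots) + (\text{logarithmic overhead})$. The key simplification is that $N = \lvert f^{-1}(x) \rvert$ is itself computable from $x$, so $C(N \mid x) = O(1)$ and $N$ contributes nothing beyond $C(x)$; and $i \leq N$ contributes at most $\log N = \log \lvert f^{-1}(x) \rvert$. Collecting terms yields
\[
C(y) \leq C(x) + \log \lvert f^{-1}(x) \rvert + O(\log(\cdots)),
\]
and rearranging gives the claimed inequality $\log \lvert f^{-1}(x) \rvert \geq C(y) - C(x) - O(\log \lvert C(y) - C(x) \rvert)$.

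The main obstacle is bookkeeping the logarithmic error terms so that the final error is exactly $O(\log \lvert C(y) - C(x) \rvert)$ rather than something coarser like $O(\log C(y))$. The point is that in a concatenated description one only needs to spend extra bits delimiting the pieces whose lengths are not already determined, and here the only genuinely free parameter is the index $i$, whose length $\log N$ is within an additive constant of $C(y) - C(x)$ up to the very error we are estimating; so the overhead is logarithmic in $\log N \approx C(y) - C(x)$. I would be slightly careful that $\lvert f^{-1}(x)\rvert$ could be infinite, in which case the inequality is trivially true, so we may assume it finite; and that if $C(y) \le C(x)$ the statement is again immediate since the left side is nonnegative. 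The rest is a routine application of Proposition \ref{complexity properties}, parts (5)--(7).
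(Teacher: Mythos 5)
Your overall strategy is the same as the paper's: describe $y$ by the pair consisting of $x$ and the index of $y$ in a computable enumeration of $f^{-1}(x)$, then unwind the complexity of that pair using Proposition \ref{complexity properties}. However, one step as written is wrong: you claim that $N=\lvert f^{-1}(x)\rvert$ is computable from $x$, so that $C(N\mid x)=O(1)$ and $N$ costs nothing. For a general computable $f$ this fails --- the set $f^{-1}(x)$ is only computably enumerable uniformly in $x$, and deciding its exact cardinality (or even whether it is empty, i.e.\ whether $x$ lies in the range of $f$) can encode the halting problem. If $N$ genuinely had to be carried in the description, it would cost up to $\log N+O(1)$ additional bits, doubling the leading term and yielding only $\log\lvert f^{-1}(x)\rvert\geq\tfrac{1}{2}\left(C(y)-C(x)\right)-O(\cdots)$, which is strictly weaker than the claimed inequality.

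The repair is immediate and is exactly what the paper does: drop $N$ from the description entirely. The decoding procedure ``enumerate the preimages of $x$ in a fixed computable order and output the $(i+1)$-st one found'' terminates using only $x$ and $i$ as data, with no need to know the total count in advance; the role of $\lvert f^{-1}(x)\rvert$ is purely to guarantee that the relevant index satisfies $i<\lvert f^{-1}(x)\rvert$. With the description reduced to the pair $(x,i)$, properties (2) and (5) of Proposition \ref{complexity properties} give $C(y)\leq C(x)+\log\lvert f^{-1}(x)\rvert+O(\log\log\lvert f^{-1}(x)\rvert)$, and the rest of your bookkeeping (including your correct observations that the infinite-preimage case and the case $C(y)\leq C(x)$ are trivial) goes through as you describe.
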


\begin{proof}
 Fix some computable enumeration $t(x,i)$ ($i \leq 0$) of the preimage of the function $f$, such that if $t(x,i)$ does not fail, then $t(x,j)$ does not fail for all $0 \leq j\leq i$ and
$t(x,i) \neq t(x,j)$ (if both do not fail) for $i \neq j$ . Then, there exists $i < \lvert f^{-1}(x)\rvert$, such that $y=t(x,i)$. Using proposition \ref{complexity properties}, we
get
\[
 C(y)=C(t(x,i)) \leq C(x,i) + O(1) \leq C(x) + \log \lvert f^{-1}(x) \rvert + \log \log \lvert f^{-1}(x) \rvert + O(1),
\]
so
\[
\log \lvert f^{-1}(x) \rvert + \log \log \lvert f^{-1}(x) \rvert \geq C(y) - C(x) + O(1) 
\]
which implies desired estimate. 
 
\end{proof}

\begin{prop}\label{complexity and size}
 Suppose, that ${\lbrace V_i \rbrace}_{i \in I}$ is an enumerable family of enumerable sets, there $I$ is some constructible set. Suppose, there exists a function $f: \mathbb{N} \to
\mathbb{N}$, such that for every $i \in I$ we have $\lvert V_i\rvert \leq f(i)$, then for every $x\in V_i$ the following holds:
\[
 C(i,x)\leq \log \lvert V_i \rvert + C(i) + O(log C(i)) 
\]

%Suppose, that $V_i$ is enumerable family of enumerable sets. Suppose, there exists a computable function $f: \mathbb{N} \to \mathbb{N}$, such that for some $c$ and every $n$ we
%have 
%\[
% \sum_{i=0}^{n-1}{f(i)} \leq c f(n) 
%\]
%, and   $\lvert V_i\rvert \leq f(i)$ for every $i$. Then the following holds:
%\[
% C(x) \leq log(f(n)) +O(1)
%\]
%for every $x,i$, such that $x \in V_n$.
\end{prop}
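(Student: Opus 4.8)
The plan is to describe the pair $(i,x)$ by describing $i$ together with the location of $x$ in an effective, repetition-free enumeration of $V_i$, and then to read off the claimed bound from the subadditivity estimates collected in Proposition \ref{complexity properties}.

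First I would exploit the hypothesis that $\{V_i\}_{i\in I}$ is an enumerable family: there is a single algorithm that, on input $i\in I$, enumerates (with possible repetitions, possibly without ever halting) the elements of $V_i$. From it I define a partial computable function $e$ by letting $e(i,j)$ be the $(j+1)$-st pairwise distinct value produced by the enumeration of $V_i$. Then $e(i,\cdot)$ has range $V_i$ and is defined exactly on $\{0,1,\dots,|V_i|-1\}$; here finiteness of each $V_i$ — which is precisely what the auxiliary function $f$ guarantees — is what makes this index range finite and hence makes $\log|V_i|$ a meaningful quantity. The map $(i,j)\mapsto(i,e(i,j))$ is partial computable, so for any $x\in V_i$, fixing $j$ with $e(i,j)=x$, item (2) of Proposition \ref{complexity properties} gives $C(i,x)\le C(i,j)+O(1)$.

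Next I would estimate $C(i,j)$. By item (5) of Proposition \ref{complexity properties}, $C(i,j)\le C(i)+C(j)+2\log\min(C(i),C(j))+O(1)$. Since $0\le j<|V_i|$, item (1) of the same proposition gives $C(j)\le\log j+O(1)\le\log|V_i|+O(1)$, while $\min(C(i),C(j))\le C(i)$ shows $2\log\min(C(i),C(j))=O(\log C(i))$. Assembling these: $C(i,x)\le C(i,j)+O(1)\le \log|V_i|+C(i)+O(\log C(i))$, which is the assertion.

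I do not expect a real obstacle; the delicate points are merely bookkeeping. One must make sure the enumeration $e$ really is uniform in $i$ (so that $C(i,j)$, not $C(i)+C(j)$ in a different model, is what appears), and one should adopt the usual convention that $\log$ of a small quantity means $O(1)$ so that the cases $|V_i|=1$ or $j=0$ are harmless. A minor temptation to avoid: one could instead combine items (6) and (7) of Proposition \ref{complexity properties}, but that route produces an error term $O(\log C(j))=O(\log\log|V_i|)$ which is not in general dominated by $O(\log C(i))$, so item (5) is the clean choice.
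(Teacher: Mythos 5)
Your proof is correct and is essentially the argument the paper intends: the paper's own "proof" is just the one-line remark that the statement follows from Proposition \ref{complexity properties}, and the details you supply (locating $x$ by its index $j<\lvert V_i\rvert$ in a uniform repetition-free enumeration of $V_i$, then bounding $C(i,j)$ via the pair-subadditivity in item (5)) match the technique the author spells out explicitly in the proof of Proposition \ref{inverse complexity}. Your closing remark about why item (5) is preferable to the route through items (6) and (7) is a sensible piece of bookkeeping, not a deviation from the paper.
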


\begin{proof}
  Follows from proposition \ref{complexity properties}. 
\end{proof}

\section{Cellular automata}\label{cellular automata}
Suppose $G$ is a group and $A$ is a finite set. We will call $A$ an alphabet. We will call $A^G$ a configuration space.
There is a natural left action of $G$ on $A^G$ defined in the following way:
\[
(gx)(g')=x(g^{-1}g').
\]
\begin{deff}
 We will call a word every map from any finite subset of $G$ to $A$.
\end{deff}

The set of all words is denoted by $A^{\star}$. For a word $w$ we denote $\dom w$ its domain. By $\lvert w \rvert$ we will denote the size of its domain. Consider a word $w$, by the
definition it is a map from the finite set $\lbrace g_1, \ldots g_{\lvert w \rvert}\rbrace$ to the alphabet $A$ 

We will endow $A^G$ with the product topology, assuming that $A$ endowed with the discrete topology. By Tychonoff theorem, $A^G$ is a compact Hausdorff topological set. Sets 
\[
 U(w)=\left\lbrace \left. x\in A^G \right \vert x{\vert}_{\dom w}=w \right\rbrace
\]
are clopen and forms the base of the topology on $A^G$.

Assuming $G$ is computable, it is obvious, that every word could be computably encoded by some natural number.

Since all finite subsets of natural numbers and  all words forms constructible sets, it make sense to consider the Kolmogorov complexity of such objects, and we will use the same notation as fo plain complexity: $C(B)$, $C(w)$, there $B$ is a finite subset, $w$ is a word.

Consider a word $w$, by the definition it is a map from the finite set $\lbrace g_1, \ldots g_\rbrace$. Assume, that $g_i < g_j$ for $1\leq i < j \leq \lvert w \rvert $ (remind, that group $G$ is defined on the set $\mathbb{N}$). Then, string $w(g_1)w(g_2)\ldots w(g_{\lvert w \rvert}) $ is called a content of word $w$, and denoted as $\cont w$.  By the definition, for word $w$ holds $C(w)=C(\dom w, \cont w)$.

\begin{deff}
 Cellular automaton over group $G$ and the finite alphabet $A$ is a map $\tau: A^G \to A^G$ such that there exist a finite set $S \subset G$, and a map $\mu: A^S \to A$, for that the following
equality holds:
\[
 \tau(x)(g)=\mu(g^{-1}x{\vert}_S)
\]

Set $S$ is called a memory set, $\mu$ is called a rule.
 
\end{deff}

\begin{deff}
 
Suppose $\tau$ is a cellular automaton with a memory set $S$ and a rule $\mu$, Suppose, $w$ is a word, such that $\dom w = BS$ for some subset $B$ of $G$. Then we will define 
\[
 \tau(w)(g)=\mu(g^{-1}x{\vert}_S)
\]

for $g \in B$. So it is no more, than the restriction of the action of the cellular automaton to the word $w$.
\end{deff}
Evidently, if $G$ is a computable group and $\tau$ is a cellular automaton over it, then restriction of $\tau$ on words is computable function. 

\section{Entropy and asymptotic Kolmogorov complexity}\label{entropy and complexity}

We want our F\o lner sequence not to affect Kolmogorov complexity  
\begin{deff}
 We will call a F\o lner sequence $\lbrace F_n \rbrace$ a modest F\o lner sequence, if 
\[
 \lim\limits_{n\to \infty} \frac{C(F_n)} {\lvert F_n \rvert} = 0 
\]
 
\end{deff}

It obviously follows from the proposition \ref{existence of folner}, that there exist at least one modest F\o lner sequence.

It would be convinient for us to use the following notation for the restriction of configuration $x\in A^G$ to the subset $B$ of G: $x{\vert}_{B}$. If $B$ is a finite set, then this restriction is a word, so, if in addition, $G$ is a computable group, then this would justify the notation $C(x{\vert}_{B})$. We would use the same notation for the restriction of words as well: $w{\vert}_{B}$ if $B$ is a subset of the domain of word $B$.

Let us fix some modest F\o lner sequence $\lbrace F_n \rbrace$.

\begin{deff}
Consider some subset $X$ of $A^{G}$. We will define its entropy as
\[
 h(X)=\limsup\limits_{n \to \infty} \frac{\log \left\lvert{ X {\vert}_{F_n}}\right\rvert}{\lvert F_n\rvert}
\]
  
\end{deff}

\begin{prop}
For the entropy the following holds:
\begin{enumerate}
 \item $h(X) \leq \log \lvert A \rvert$
 \item $h(A^G)=\log \lvert A\rvert$
 \item $h(\tau(X)) \leq h(X)$.
\end{enumerate}

\end{prop}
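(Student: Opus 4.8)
The plan is to verify the three statements directly from the definition of entropy as a $\limsup$ of normalized logarithmic cardinalities along the modest F\o lner sequence $\lbrace F_n \rbrace$.

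\textbf{Item (1).} For any subset $X$ of $A^G$ and any $n$, the restriction set $X{\vert}_{F_n}$ is a subset of $A^{F_n}$, hence $\lvert X{\vert}_{F_n}\rvert \leq \lvert A\rvert^{\lvert F_n\rvert}$. Taking logarithms, dividing by $\lvert F_n\rvert$, and passing to the $\limsup$ gives $h(X)\leq \log\lvert A\rvert$ immediately. This is essentially a one-line argument with no obstacle.

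\textbf{Item (2).} For $X=A^G$ we have $X{\vert}_{F_n}=A^{F_n}$ exactly (every word on $F_n$ extends to a configuration, e.g. by declaring it arbitrary off $F_n$), so $\log\lvert X{\vert}_{F_n}\rvert / \lvert F_n\rvert = \log\lvert A\rvert$ for all $n$, and the $\limsup$ of a constant sequence is that constant. Again routine; the only thing to note is surjectivity of restriction $A^G\to A^{F_n}$, which is trivial since $F_n$ is finite and $A$ nonempty.

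\textbf{Item (3).} I would use the fact that a cellular automaton $\tau$ with memory set $S$ has the property that $\tau(x){\vert}_{F_n}$ is determined by $x{\vert}_{F_n S}$ — this is just the definition $\tau(x)(g)=\mu(g^{-1}x{\vert}_S)$, applied for $g$ ranging over the (appropriate subset of) $F_n$; more precisely $\tau(X){\vert}_{F_n}$ is the image of $X{\vert}_{F_n S}$ under the induced map on words. Hence $\lvert \tau(X){\vert}_{F_n}\rvert \leq \lvert X{\vert}_{F_n S}\rvert \leq \lvert X{\vert}_{F_n}\rvert\cdot \lvert A\rvert^{\lvert F_n S\setminus F_n\rvert}$. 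Taking logarithms and dividing by $\lvert F_n\rvert$, the main term is $\log\lvert X{\vert}_{F_n}\rvert/\lvert F_n\rvert$ and the error term is bounded by $\log\lvert A\rvert\cdot \lvert F_n S\setminus F_n\rvert/\lvert F_n\rvert$, which tends to $0$ by Proposition \ref{folner reformulated} (applied to the finite set $S$, noting $\lvert F_n S\setminus F_n\rvert \leq \lvert F_n S\Delta F_n\rvert$). Passing to the $\limsup$ yields $h(\tau(X))\leq h(X)$.

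The only mildly delicate point — and the one I would be careful about — is the bookkeeping in item (3): matching the domain on which $\tau(w)$ is defined (namely $B$ when $\dom w = BS$) with $F_n$, so that one should really compare $\tau(X){\vert}_{F_n}$ against $X{\vert}_{F_n S}$ rather than $X{\vert}_{F_n}$ directly, and then absorb the discrepancy $F_n S\setminus F_n$ using the F\o lner property. Everything else is immediate from the definitions.
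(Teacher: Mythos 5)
Your proof is correct and is essentially the standard argument: the paper itself does not spell out a proof but simply cites \cite{CAG} (Propositions 5.7.1--5.7.3 there), where the same three steps appear — the trivial cardinality bound for (1), exactness of restriction for (2), and the comparison $\lvert \tau(X)\vert_{F_n}\rvert \le \lvert X\vert_{F_nS}\rvert \le \lvert X\vert_{F_n}\rvert\cdot\lvert A\rvert^{\lvert F_nS\setminus F_n\rvert}$ combined with Proposition \ref{folner reformulated} for (3). Nothing to correct.
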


\begin{proof}
 For the proof see \cite{CAG}, propositions 7.7.2 and 5.7.3 on pages 125,126.
\end{proof}

\begin{deff}
 For the element $x \in A^G$ we will define its asymptotic complexity as
 \[
 hc(x)=\limsup\limits_{n \to \infty} \frac{C( x {\vert}_{F_n})}{\lvert F_n\rvert}
\]
\end{deff}

\begin{prop}
For every $x \in A^G$ we have  
\[
 hc(x)=\limsup\limits_{n \to \infty} \frac{C \left({ \cont (x {\vert}_{F_n}) }\right) }{\lvert F_n\rvert}
\]

\end{prop}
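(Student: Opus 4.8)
The plan is to reduce everything to the defining identity $C(w)=C(\dom w,\cont w)$, which for $w=x{\vert}_{F_n}$ reads $C(x{\vert}_{F_n})=C(F_n,\cont(x{\vert}_{F_n}))$, and then to show that the pair complexity $C(F_n,\cont(x{\vert}_{F_n}))$ differs from $C(\cont(x{\vert}_{F_n}))$ by a quantity that is $o(\lvert F_n\rvert)$. Once that is established, dividing by $\lvert F_n\rvert$ and passing to $\limsup$ gives the claimed equality immediately.

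First I would do the easy inequality. Since $\cont(x{\vert}_{F_n})$ is obtained from the pair $(F_n,\cont(x{\vert}_{F_n}))$ by the computable operation ``take the second coordinate'', item (2) of Proposition \ref{complexity properties} gives
\[
C(\cont(x{\vert}_{F_n}))\leq C(F_n,\cont(x{\vert}_{F_n}))+O(1)=C(x{\vert}_{F_n})+O(1),
\]
so dividing by $\lvert F_n\rvert$ and taking $\limsup$ yields $\limsup_n \frac{C(\cont(x{\vert}_{F_n}))}{\lvert F_n\rvert}\leq hc(x)$. For the reverse inequality I would apply item (5) of Proposition \ref{complexity properties} to the pair $(F_n,\cont(x{\vert}_{F_n}))$:
\[
C(x{\vert}_{F_n})=C(F_n,\cont(x{\vert}_{F_n}))\leq C(F_n)+C(\cont(x{\vert}_{F_n}))+2\log\min\bigl(C(F_n),C(\cont(x{\vert}_{F_n}))\bigr)+O(1).
\]
Bounding the minimum from above by $C(F_n)$ gives $C(x{\vert}_{F_n})\leq C(\cont(x{\vert}_{F_n}))+C(F_n)+2\log C(F_n)+O(1)$. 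Because $\lbrace F_n\rbrace$ is modest we have $C(F_n)=o(\lvert F_n\rvert)$, and then also $\log C(F_n)=o(\lvert F_n\rvert)$ (if $C(F_n)$ stays bounded this is clear since $\lvert F_n\rvert\to\infty$, and otherwise $\log C(F_n)=o(C(F_n))=o(\lvert F_n\rvert)$). Hence $C(x{\vert}_{F_n})=C(\cont(x{\vert}_{F_n}))+o(\lvert F_n\rvert)$; dividing by $\lvert F_n\rvert$ and taking $\limsup$ gives $hc(x)\leq\limsup_n \frac{C(\cont(x{\vert}_{F_n}))}{\lvert F_n\rvert}$, and combining the two inequalities finishes the argument.

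There is no real obstacle in this proposition: the content is entirely the bookkeeping of the additive $O(\log)$ error terms coming from the ``complexity of a pair'' inequalities, and the single genuine use of a hypothesis is modesty of the F\o lner sequence, which is precisely what guarantees that the ambient configuration data $\dom(x{\vert}_{F_n})=F_n$ contributes negligibly compared with $\lvert F_n\rvert$. Without the modesty assumption the two limsups could legitimately differ, so I would emphasize that point as the conceptual heart of the statement.
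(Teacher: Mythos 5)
Your proof is correct and follows essentially the same route as the paper: both arguments reduce the statement to the identity $C(w)=C(\dom w,\cont w)$ and then show that $C(F_n,\cont(x{\vert}_{F_n}))$ and $C(\cont(x{\vert}_{F_n}))$ differ by $o(\lvert F_n\rvert)$ using modesty of the F\o lner sequence. The only (immaterial) difference is that you invoke items (2) and (5) of Proposition \ref{complexity properties} to get the two inequalities separately, whereas the paper handles both at once via the chain rule, item (7).
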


\begin{proof}
It is enough to prove, that
\[
 \lim\limits_{n \to \infty} \frac{C(x{\vert}_{F_n})-C( \cont (x {\vert}_{F_n}))}{\lvert F_n\rvert}=0
\]
 Using proposition \ref{complexity properties} we have
\begin{multline*}
 \lim\limits_{n \to \infty} \frac{\left\lvert C(x{\vert}_{F_n})-C( \cont (x {\vert}_{F_n}))\right\rvert}{\lvert F_n\rvert} = \\ 
= \lim\limits_{n \to \infty} \frac{\left\lvert C(F_n \vert (x{\vert}_{F_n})) + O(\log (\max (C(F_n), C(\cont (x {\vert}_{F_n})))))\right\rvert} {\lvert F_n\rvert} = 0 
\end{multline*}
the last equality is true, since
\[
C(F_n \vert (x{\vert}_{F_n})) \leq C(F_n) + O(1) =o(\lvert F_n \rvert),
\]
(by the proposition \ref{complexity properties} and the definition of modest F\o lner sequence),  and  
\[
 C(\cont (x {\vert}_{F_n})) \leq \lvert F_n \rvert \log \lvert A \rvert  + O(1). 
\]
by the proposition \ref{complexity properties}.

\end{proof}

The following proposition allows us to estimate the difference of the complexities of two words in terms of the symmetric difference of their supports.

\begin{prop}\label{complexity difference}
 Suppose, $w_1$ and $w_2$ are two words, then 
\begin{multline*}
 \left\lvert C(w_1)-C(w_2) \right \rvert \leq  \lvert \dom (w_1) \Delta \dom (w_2) \rvert \cdot \log \lvert A \rvert +\\+ O(\max(C(\cont (w_1), \cont(w_2))))
\end{multline*}

\end{prop}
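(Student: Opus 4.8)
The plan is to reduce the comparison of $C(w_1)$ and $C(w_2)$ to a description of $w_1$ in terms of $w_2$ together with the ``patch'' needed to convert one into the other. First I would introduce the word $w_0 = w_1\vert_{\dom w_1 \cap \dom w_2}$, the common restriction, and the two ``difference words'' $v_j = w_j\vert_{\dom w_j \setminus \dom w_0}$ for $j=1,2$. Then $w_1$ is completely determined by the pair $(w_0, v_1)$: knowing $\dom v_1$ and $\cont v_1$ and likewise for $w_0$ we can reconstruct $\dom w_1$ and $\cont w_1$. Hence $C(w_1) \leq C(w_0, v_1) + O(1)$, and symmetrically $C(w_2) \leq C(w_0, v_2) + O(1)$.

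Next I would apply property (7) of proposition \ref{complexity properties} to split $C(w_0,v_1) = C(w_0) + C(v_1 \mid w_0) + O(\log \max(C(w_0), C(v_1)))$, and use property (6) to bound $C(v_1 \mid w_0) \leq C(v_1) + O(1)$. Since $\dom v_1 \subseteq \dom w_1 \Delta \dom w_2$, the word $v_1$ has at most $\lvert \dom w_1 \Delta \dom w_2\rvert$ cells, so by properties (1) and (2) (encoding $v_1$ by naming its domain inside the ambient finite set and listing its content) we get $C(v_1) \leq \lvert \dom w_1 \Delta \dom w_2\rvert \cdot \log\lvert A\rvert + O(\log(\cdots))$; strictly, one should appeal to proposition \ref{complexity and size} with the family of all words on subsets of a given finite set to get the clean $\log\lvert A\rvert$ coefficient. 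Putting these together yields
\[
C(w_1) \leq C(w_0) + \lvert \dom w_1 \Delta \dom w_2\rvert \cdot \log\lvert A\rvert + O(\max(\log C(w_0),\log C(\cont w_1),\log C(\cont w_2))),
\]
and by the same argument with the roles reversed, $C(w_2) \leq C(w_0) + \lvert \dom w_1 \Delta \dom w_2\rvert\cdot\log\lvert A\rvert + O(\cdots)$. Subtracting the trivial bound $C(w_0) \leq C(w_j) + O(1)$ (since $w_0$ is a computable restriction of $w_j$) from each inequality gives $C(w_1) - C(w_2) \leq \lvert \dom w_1 \Delta \dom w_2\rvert\cdot\log\lvert A\rvert + O(\cdots)$ and the reverse, which is the claim.

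I expect the main obstacle to be purely bookkeeping: making sure the error terms from (6), (7), and the encoding step all collapse into a single $O(\max(C(\cont w_1), \cont w_2))$ — in particular that the various $\log C$ quantities appearing are each dominated by $\log$ of one of the contents' complexities (note $C(\dom w_j)$ and $C(w_0)$ are controlled once one observes $w_0$ is obtained computably from $w_j$, so $C(w_0) \le C(w_j) + O(1) \le C(\dom w_j, \cont w_j) + O(1)$). There is also a minor subtlety that the $O(\cdot)$ in the statement is written as $O(\max(C(\cont(w_1),\cont(w_2))))$ without a logarithm, which is a weaker (hence easier) target than what the argument actually produces, so no sharpness is lost. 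No genuine conceptual difficulty arises beyond assembling propositions \ref{complexity properties} and \ref{complexity and size}.
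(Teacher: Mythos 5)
Your proposal is correct and is essentially the paper's argument: both are patch arguments in which one word is described by a nearby word plus the content on the domain difference, costing $\lvert \dom(w_1)\,\Delta\,\dom(w_2)\rvert\cdot\log\lvert A\rvert$ bits plus lower-order terms, the only difference being that you route the description symmetrically through the common restriction $w_0$ while the paper reconstructs $w_1$ directly from the triple $\bigl(w_2,\ \dom(w_1),\ \cont(w_1\vert_{\dom(w_1)\setminus\dom(w_2)})\bigr)$ and then invokes symmetry. The loose ends you flag are exactly the paper's as well: both arguments implicitly use that $w_1$ and $w_2$ agree on $\dom(w_1)\cap\dom(w_2)$ (true in the proposition's only application, where both are restrictions of one configuration), and both quietly absorb the complexities of the domains (needed in your steps $C(v_j)\le\cdots$ and $C(w_0)\le C(w_j)+O(1)$, since the intersection domain is not computable from $w_j$ alone) into the $O(\max(\cdot))$ error term, which is evidently intended to read $O(\max(C(\dom w_1),C(\dom w_2)))$ given how the proposition is later applied.
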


\begin{proof}
 Evidently, there exists an algorithms, which recovers $w_1$ from the triple $w_2$, $\dom (w_1)$ and $\cont(w_1\vert_{\dom(w_1 \setminus w_2 )})$. From the proposition
\ref{complexity properties} follows
\[
 C(w_1) \leq C(w_2) + O(C(\dom(w_1))) + C(w_1) + \log \lvert A \rvert \cdot \lvert \dom (w_1) \Delta \dom (w_2) \rvert.
\]
By symmetry we get
\[
 C(w_2) \leq C(w_1) + O(C(\dom(w_2))) + C(w_2) + \log \lvert A \rvert \cdot \lvert \dom (w_1) \Delta \dom (w_2) \rvert.
\] 
Combining this two we get the desired estimate.

\end{proof}

Let us endow $A^G$ with the Bernoulli measure $\nu$.

\begin{prop}\label{asymptotic complexity properties}
 The following holds:
\begin{enumerate}
 \item for every $x\in A^G$ we have $hc(x) \leq \log \lvert A \rvert$
 \item for almost every $x \in A^G$ we have $hc(x)=\log \lvert A \rvert$
 %\item $hc$ is shift invariant, that is, for all $x \in A^G$, and $g \in G$ we have $hc(gx)=hc(x)$
 \item for every $x \in A^G$ we have $hc(\tau(x)) \leq hc(x)$.
\end{enumerate}

\end{prop}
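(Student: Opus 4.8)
The plan is to handle the three parts separately, with parts (1) and (3) being short and part (2) requiring a genuine probabilistic argument. For part (1), I would simply invoke the crude bound $C(w) \leq |w|\log|A| + O(1)$ from item (1) of Proposition \ref{complexity properties} applied to $w = \mathop{cont}(x|_{F_n})$, which is a string of length $|F_n|$ over the alphabet $A$; together with the already-established equivalence of $hc(x)$ with the $\limsup$ of $C(\mathop{cont}(x|_{F_n}))/|F_n|$, this gives $hc(x)\leq\log|A|$ immediately. For part (3), let $S$ be a memory set for $\tau$. Then $\tau(x)|_{F_n}$ is determined by $x|_{F_n S}$ via the (computable) restricted action of $\tau$ on words, so by item (2) of Proposition \ref{complexity properties}, $C(\tau(x)|_{F_n}) \leq C(x|_{F_n S}) + O(1)$. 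Now $x|_{F_n S}$ and $x|_{F_n}$ are two words whose domains differ by $|F_n S \Delta F_n|$ elements, so Proposition \ref{complexity difference} (or directly Proposition \ref{complexity properties}, recovering $x|_{F_nS}$ from $x|_{F_n}$, $F_n S$ and the content on the extra cells) gives $C(x|_{F_nS}) \leq C(x|_{F_n}) + |F_nS\Delta F_n|\log|A| + o(|F_n|)$. Dividing by $|F_n|$ and using Proposition \ref{folner reformulated} to kill the $|F_nS\Delta F_n|/|F_n|$ term, then taking $\limsup$, yields $hc(\tau(x))\leq hc(x)$.

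For part (2), the strategy is a Borel--Cantelli / counting argument. Fix $\varepsilon > 0$. By item (3) of Proposition \ref{complexity properties}, the number of words with content of length $|F_n|$ and plain complexity at most $(\log|A| - \varepsilon)|F_n|$ is at most $c\,2^{(\log|A|-\varepsilon)|F_n|}$ (I will need to be slightly careful that we are counting contents, i.e. strings, not words-with-domain, but since $F_n$ is fixed the domain carries no extra information and the bound on $C(\mathop{cont})$ transfers to $C(x|_{F_n})$ up to the $o(|F_n|)$ correction already used above). Under the Bernoulli measure $\nu$, each particular content has probability exactly $|A|^{-|F_n|} = 2^{-|F_n|\log|A|}$, so the $\nu$-probability that $C(\mathop{cont}(x|_{F_n})) \leq (\log|A|-\varepsilon)|F_n|$ is at most $c\,2^{-\varepsilon|F_n|}$. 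Since $\{F_n\}$ is a F\o lner sequence on an infinite group, $|F_n| \to \infty$, and by passing to a subsequence along which $|F_n|$ grows fast enough (e.g. $|F_n| \geq n$, available by Proposition \ref{existence of folner} — though here $\{F_n\}$ is our fixed modest sequence, so I would instead just note $\sum_n 2^{-\varepsilon|F_n|} < \infty$ whenever $|F_n|$ is eventually $\geq (2/\varepsilon)\log n$, and reduce to such a subsequence, which does not change the $\limsup$) the Borel--Cantelli lemma shows that for $\nu$-almost every $x$, only finitely many $n$ satisfy the inequality, hence $hc(x) \geq \log|A| - \varepsilon$ a.s. Intersecting over $\varepsilon = 1/k$, $k\in\mathbb{N}$, gives $hc(x) \geq \log|A|$ a.s., and combined with part (1) this forces equality.

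The main obstacle is the subsequence/summability issue in part (2): a general modest F\o lner sequence need not have $\sum_n 2^{-\varepsilon|F_n|}$ convergent (the sizes $|F_n|$ could grow very slowly, or even repeat), so direct Borel--Cantelli over all $n$ may fail. The clean fix is to observe that the $\limsup$ defining $hc(x)$ is unchanged if we restrict attention to any subsequence $\{F_{n_k}\}$ with $|F_{n_k}|\to\infty$ sufficiently fast (since for the lower bound we only need $hc(x) \geq \log|A|-\varepsilon$, and $\limsup$ over a subsequence is $\leq \limsup$ over the full sequence), choose $n_k$ so that $|F_{n_k}| \geq k$, apply Borel--Cantelli along that subsequence, and conclude. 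One should also double-check the measurability of the set $\{x : C(x|_{F_n}) \leq t\}$, which is immediate since it depends only on the finitely many coordinates in $F_n$.
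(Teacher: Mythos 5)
Your parts (1) and (3) follow the paper's proof essentially verbatim: the crude bound $C(\cont(x|_{F_n}))\leq |F_n|\log|A|+O(1)$ for (1), and for (3) the identity $\tau(x)|_{F_n}=\tau(x|_{F_nS})$ combined with Proposition \ref{complexity difference}, the F\o lner property, and modesty of the sequence. For part (2) your counting estimate is the same as the paper's (at most $c\,2^{(\log|A|-\varepsilon)|F_n|}$ low-complexity contents, each of $\nu$-probability $|A|^{-|F_n|}$, giving a bad-event probability of order $2^{-\varepsilon|F_n|}$), but you convert this into an almost-sure statement by a different mechanism. You use Borel--Cantelli, which forces you to confront summability of $\sum_n 2^{-\varepsilon|F_n|}$ and to repair it by passing to a subsequence with $|F_{n_k}|\geq k$ --- a fix that is correct, since a lower bound on the $\limsup$ along a subsequence lower-bounds the full $\limsup$. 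The paper avoids this entirely by exploiting that $hc$ is a $\limsup$: the event $\{hc(x)<(1-\varepsilon)\log|A|\}$ is an ``eventually always'' event, i.e.\ it is contained in $\bigcup_{i}\bigcap_{j>i}A_j$ where $A_j$ is the low-complexity event at stage $j$, and each tail intersection $\bigcap_{j>i}A_j$ is contained in a single $A_j$ of arbitrarily small measure, so it is null with no summability hypothesis needed. Your route proves something slightly stronger (only finitely many bad indices along the subsequence) at the cost of the extra subsequence bookkeeping; the paper's route is more economical and works along the full sequence. Both arguments are valid.
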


\begin{proof}
The following inequality implies the first statement :
\[
  C(\cont(x\vert_{F_n})) \leq \log \lvert F_n\rvert\lvert A \rvert + O(1), 
\]

 For the proof of the second statement consider a set 
\[
 E_{\varepsilon}=\lbrace x\in A^G \vert hc(x) < (1-\varepsilon)\log \lvert A \rvert \rbrace
\]
We have
\[
 E_{\varepsilon} = \bigcup_{i>0} \bigcap_{j>i} \lbrace x \in A^G \vert C(\cont (x\vert F_j)) \leq (1-\varepsilon) \lvert F_j\rvert \log \lvert A \rvert\rbrace 
\]
But since 
\begin{multline*}
\nu\left(\left\lbrace x \in A^G \vert C(\cont(x\vert F_j)) \leq (1-\varepsilon) \lvert F_j\rvert\log \lvert A \rvert\right\rbrace\right) \leq \\ \leq \frac{\left\lvert \left\lbrace {y \vert C(y) <
(1-\varepsilon)
\lvert F_j\rvert \log \lvert A \rvert} \right\rbrace \right\rvert}{{\lvert A \rvert}^{\lvert F_j \rvert}} \leq  {\lvert A \rvert}^{-\varepsilon \lvert F_j\rvert} \to 0
\end{multline*}
We have $\nu(E_{\varepsilon})=0$. Taking a sequence $\varepsilon_n \to 0$, we see, that $\nu(\cup_{i>0} {E_{\varepsilon_i}})=0$, so, almost every $x \in A^G$ has $hc(x)=\log \lvert A
\rvert $, because $hc(x) \leq \lvert A \rvert$ for every $x$.

%Let's prove the third statement. Consider arbitrary $x \in A^G$ an $g \in G$. By definition, we have
%\[
% hc(x)=\limsup\limits_{n \to \infty} \frac{C( x {\vert}_{F_n})}{\lvert F_n\rvert}
%\]

%and

%\[
% hc(gx)=\limsup\limits_{n \to \infty} \frac{C( x {\vert}_{g^{-1}F_n})}{\lvert F_n\rvert}
%\]

%It's enough to prove, that 
%\[
% \lim\limits_{n \to \infty} \frac{C( x {\vert}_{g^{-1}F_n})-C( x {\vert}_{F_n})}{\lvert F_n\rvert}=0.
%\]
%By proposition \ref{complexity difference} we have 
%\begin{multline*}
% \left\lvert C( x {\vert}_{g^{-1}F_n})-C( x {\vert}_{F_n}) \right\rvert \leq O(\max(C(F_n), C(g^{-1}F_n)))+\lvert g^(-1)F_n \Delta F_n \rvert \cdot \log \lvert A \rvert \leq
%o(\lvert F_n \rvert). 
%\end{multline*}
%We used the definition of the F\o lner sequence, the good F\o lner sequence, and the fact, that 
%\[
% C(g^{-1}F_n) \leq C(F_n)+O(1).
%\]
 
For the proof of the third statement, consider
\[
 hc(x)=\limsup\limits_{n \to \infty} \frac{C( x {\vert}_{F_n})}{\lvert F_n\rvert}
\]
and 

\[
 hc(\tau(x))=\limsup\limits_{n \to \infty} \frac{C( \tau(x) {\vert}_{F_n})}{\lvert F_n\rvert}
\]
but

\[
 \tau(x) {\vert}_{F_n} = \tau(x {\vert}_{F_n S}),
\]
so

\[
 C(\tau(x) {\vert}_{F_n}) \leq C(x {\vert}_{F_n S})+O(1).
\]
This mean, it is enough to prove, that
\[
 \lim\limits_{n\to\infty}\frac{C(x {\vert}_{F_n S})-C(x {\vert}_{F_n})}{\lvert F_n \rvert }=0 
\]
Proposition \ref{complexity difference} implies
\begin{equation*}
 \left\lvert C(x {\vert}_{F_n S})-C(x {\vert}_{F_n}) \right\rvert \leq \lvert F_n S \Delta F_n \rvert \cdot \log \lvert A \rvert + O(\max(C(F_n),C(F_n S)))
\end{equation*}
By the definition of the F\o lner sequence, 
\[
 \lvert F_n S \Delta F_n \rvert = o(\lvert F_n \rvert)
\]
Using proposition \ref{complexity properties}, since $F_n S$ is computable from $F_n$, and by the definition of modest F\o lner sequence we have
\[
 C(F_n S) \leq C(F_n) + O(1) = o(\lvert F_n \rvert), 
\]
which finishes proof.

\end{proof}

\begin{deff}
 An open set $V$ is called an effectively open set, if 
\[
 V=\bigcup\limits_{w \in W} U(w)
\]
for some enumerable set $W$ of words.

A closed set is called effectively closed, if it is a complement of an effectively open set.

\end{deff}

\begin{prop}\label{entropy and asymptotic complexity}
 If $X$ is an effectively closed set, then for every $x \in X$ we have 
\[
 hc(x) \leq h(X).
\]

\end{prop}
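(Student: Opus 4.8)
The plan is to bound $C(x\vert_{F_n})$ from above using the fact that $x\vert_{F_n}$ belongs to the (small) set of words that can appear as the restriction of an element of $X$ to $F_n$, and then apply Proposition \ref{complexity and size} with $I$ indexing these sets by $F_n$. Concretely, for a finite set $B \subset G$ set $V_B = \{ w : \dom w = B,\ U(w) \cap X \neq \emptyset \}$; this is exactly $X\vert_B$ viewed as a set of words, and $|V_B| = |X\vert_B|$. The first thing to check is that the family $\{V_B\}$, indexed over all finite $B \subset G$, is an enumerable family of enumerable sets. This is where effective closedness of $X$ is used: since $X = A^G \setminus \bigcup_{w \in W} U(w)$ for an enumerable set of words $W$, a word $u$ with $\dom u = B$ fails to lie in $V_B$ exactly when $U(u) \subseteq \bigcup_{w \in W} U(w)$, i.e. when finitely many of the $U(w)$'s already cover $U(u)$ — and this is a co-enumerable condition, so membership in $V_B$ is co-enumerable, hence (since there are only finitely many candidate words with domain $B$, namely $|A|^{|B|}$ of them) $V_B$ is enumerable, uniformly in $B$.

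Next I would apply Proposition \ref{complexity and size}: with $f(B) = |A|^{|B|}$ as the (trivial) size bound — though in fact we want the sharper bound $|V_B| = |X\vert_B|$, which is exactly what the proposition delivers — we get, for every $x \in X$,
\[
 C(F_n, x\vert_{F_n}) \leq \log |X\vert_{F_n}| + C(F_n) + O(\log C(F_n)).
\]
Since $C(x\vert_{F_n}) = C(\dom(x\vert_{F_n}), \cont(x\vert_{F_n})) \leq C(F_n, x\vert_{F_n}) + O(1)$ (the content is recoverable from the word, and vice versa given the domain), this yields
\[
 C(x\vert_{F_n}) \leq \log |X\vert_{F_n}| + C(F_n) + O(\log C(F_n)).
\]

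Finally I would divide by $|F_n|$ and take $\limsup_{n\to\infty}$. The term $\frac{\log |X\vert_{F_n}|}{|F_n|}$ has $\limsup$ equal to $h(X)$ by definition of entropy. The term $\frac{C(F_n)}{|F_n|}$ tends to $0$ because $\{F_n\}$ is a modest F\o lner sequence, and $\frac{O(\log C(F_n))}{|F_n|}$ tends to $0$ a fortiori (using $C(F_n) \leq \log F_n + O(1)$ from Proposition \ref{complexity properties}, so $\log C(F_n) = O(\log\log F_n)$, which is certainly $o(|F_n|)$ since $|F_n|\to\infty$). Hence $hc(x) = \limsup_n \frac{C(x\vert_{F_n})}{|F_n|} \leq h(X)$, as claimed. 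The main obstacle is the first step: making precise that $\{V_B\}$ is genuinely an enumerable family of enumerable sets, i.e. that from $B$ and the enumeration of $W$ one can semi-decide "$w \in X\vert_B$" uniformly — everything after that is bookkeeping with the complexity inequalities already established.
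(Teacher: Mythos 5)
Your overall strategy is the same as the paper's: exhibit $x\vert_{F_n}$ as a member of a small, effectively listable set of words with domain $F_n$ and invoke Proposition \ref{complexity and size}; the final bookkeeping (dividing by $\lvert F_n\rvert$, using modesty of the F\o lner sequence and $\limsup \log\lvert X\vert_{F_n}\rvert/\lvert F_n\rvert = h(X)$) is fine. But the step you yourself single out as the main obstacle is where the proposal breaks. You argue that membership in $V_B=X\vert_B$ is co-enumerable and then conclude that, because there are only $\lvert A\rvert^{\lvert B\rvert}$ candidate words with domain $B$, the family $\lbrace V_B\rbrace$ is uniformly enumerable. That inference is false: a co-enumerable subset of a uniformly given finite set need not be uniformly enumerable. (Take $V_n\subset\lbrace 0,1\rbrace$ with $0\in V_n$ always and $1\in V_n$ if and only if the $n$-th machine never halts: the complements are uniformly enumerable and each $V_n$ sits inside a two-element set, yet a uniform enumeration of the $V_n$ would decide the halting problem.) Concretely, to enumerate $X\vert_B$ you would have to certify $U(u)\cap X\neq\varnothing$ in finite time, and effective closedness only lets you certify the opposite, $U(u)\cap X=\varnothing$. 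Finiteness of the ambient set makes each individual $V_B$ decidable, but Proposition \ref{complexity and size} requires uniformity in $B$.

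The paper repairs exactly this point by never computing $X\vert_f$ itself. Fix a rational $q>h(X)$ and a constant $c$ with $\lvert X\vert_{F_n}\rvert\leq c\,2^{q\lvert F_n\rvert}$ for all $n$, enumerate the words $w_1,w_2,\ldots$ with $A^G\setminus X=\bigcup_i U(w_i)$, and set $V_{f,k}=\bigl(A^G\setminus\bigcup_{i=1}^{k}U(w_i)\bigr)\vert_f$. These sets are computable uniformly in $(f,k)$ and decrease to $X\vert_f$ as $k\to\infty$ by compactness, so one may take $V_f$ to be the first $V_{f,k}$ with $\lvert V_{f,k}\rvert\leq c\,2^{q\lvert f\rvert}$; this is a genuinely enumerable family of over-approximations containing $X\vert_f$. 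One then gets $C(F_n,x\vert_{F_n})\leq q\lvert F_n\rvert+O(C(F_n))+O(1)$ rather than your sharper bound with $\log\lvert X\vert_{F_n}\rvert$, and letting $q$ decrease to $h(X)$ recovers the statement. Your argument needs this (or an equivalent) modification before the rest goes through.
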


\begin{proof}
It is clear, that $\lvert X {\vert}_{F_n}\rvert \leq {\lvert A \rvert} ^ {\lvert F_n \rvert}$

Let us fix some rational $q>h(X)$. By the definition of the entropy, there exists a constant $c$, such that
\[
 \lvert X {\vert}_{F_n}\rvert \leq c 2^{q \lvert F_n \rvert}
\]
If there exists an enumerable family of enumerable sets $\lbrace V_f \rbrace$, there $f \in
\lbrace F_1, F_2, \ldots \rbrace$,  such that $X {\vert}_{f} \subset V_f$ and $\lvert V_{F_n}\rvert
\leq c 2^{ q \lvert F_n \rvert}$ for
all $n$, then by the proposition \ref{complexity and size} we would have for $y \in V_n$ 
\[
 C(y, F_n) \leq O(C(F_n)) + q \lvert F_n \rvert
\]
and therefore, for every $x \in X$
\[
 hc(x)\leq q
\]%:
But since $q$ is an arbitrary rational, bigger than $h(X)$, we would have $hc(x) \leq hc(X)$ for every $x \in X$.

Let us prove existence of $\lbrace V_f \rbrace$.
$X$ is an effectively closed set, so
\[
 A^G \setminus X = \bigcup_{w \in W} U(w)
\]
for some enumerable set of words $W$.
Consider a word $v$ with $\dom v = f$. Suppose, that $U(v) \cap X = \varnothing$. Then, 
\[
U(v) \subset \bigcup_{w \in W} U(w),
\]
and since it is an open covering of the compact set, we have, $U(v)$ could be covered by some finite subset. This mean, that if $U(v) \cap X = \varnothing$, we could realize it in
a finite time. Therefore, the set of such $v$-s is enumerable(and computably depends on n). Really, assertion, that $U(v)$ is covered by $U(w_1), \ldots, U(w_k)$ is equivalent to the fact, that for every word $t$ with domain 
\[
 E = \dom (v) \bigcup_{i=1}^k{\dom(w_i)}
\]
from the fact, that $\left(t{\vert}_{\dom v}\right)=v$
follows, that there exists $1 \leq i \leq k$, such that $\left(t{\vert}_{\dom w_i}\right)=w_i$
This means, that sets 
\[
 V_{f,k}=(A^G \setminus \bigcup_{i=1}^{k}{U(w_i)}){\vert}_f
\]
forms the computable family of the finite sets, and we could get $V_f$ to be the first $V_{f,k}$ with
$ \lvert V_{f,k} \rvert \leq c \cdot 2^{q \lvert f \rvert}$.      

\end{proof}

\begin{deff}
A cellular automaton $\tau$ over a group $G$ and an alphabet $A$ is called pre-injective, if for every $x,y \in A^G$, such that $x$ and $y$ coincides outside some finite set, and $\tau(x)=\tau(y)$, we have $x=y$.  
\end{deff}

\begin{deff}
A configuration $x\in A^G$ is called a Garden of Eden configuration, if its pre-image is empty.   
\end{deff}

\begin{theor}[The Garden of Eden theorem]
For a cellular automaton $\tau$ over a group $G$ and an alphabet $A$ the following are equivalent:
\begin{enumerate}
 \item There is no Garden of Eden configuration.
 \item $\tau$ is pre-injective.
 \item $h(\tau(A^G))=h(A^G)$.
\end{enumerate}
\end{theor}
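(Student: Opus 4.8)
The plan is to close the cycle $(1)\Rightarrow(3)\Rightarrow(2)\Rightarrow(1)$, reusing against the F\o lner sequence $\lbrace F_n\rbrace$ the two classical counting arguments behind the Garden of Eden theorem. Recall that $G$ is amenable, that $h(A^G)=\log\lvert A\rvert$, and that by the Ornstein--Weiss lemma $h$ is a genuine limit, independent of the chosen F\o lner sequence (\cite{CAG}); fix a memory set $S$ of $\tau$ with $1_G\in S=S^{-1}$. The one combinatorial device I will use repeatedly is a \emph{packing estimate}: for any fixed finite $E\subseteq G$ with $1_G\in E$ there is $\alpha(E)>0$ so that, for all large $n$, the set $F_n$ contains pairwise disjoint translates $g_1E,\dots,g_{m_n}E\subseteq F_n$ with $m_n\geq\alpha(E)\lvert F_n\rvert$; this follows by greedy selection together with $\lvert F_nEE^{-1}\Delta F_n\rvert=o(\lvert F_n\rvert)$, which is proposition \ref{folner reformulated}.

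\noindent\textbf{Step A: $(1)\Leftrightarrow(3)$.} The implication $(1)\Rightarrow(3)$ is trivial, since then $\tau(A^G)=A^G$. For the converse I argue by contraposition: if $\tau$ is not surjective then, $\tau(A^G)$ being closed (compactness), $G$-invariant and proper, some cylinder $U(w)$ is disjoint from it, so no translate of the word $w$ occurs in any configuration of $\tau(A^G)$. Feeding $E=\dom(w)(\dom w)^{-1}\cup\lbrace 1_G\rbrace$ into the packing estimate I get, inside $F_n$, disjoint translates $g_1\dom w,\dots,g_{m_n}\dom w$ with $m_n\geq\alpha\lvert F_n\rvert$, on each of which a configuration of $\tau(A^G)$ shows at most $\lvert A\rvert^{\lvert\dom w\rvert}-1$ of the $\lvert A\rvert^{\lvert\dom w\rvert}$ patterns; hence
\[
\bigl\lvert\tau(A^G){\vert}_{F_n}\bigr\rvert\leq\bigl(\lvert A\rvert^{\lvert\dom w\rvert}-1\bigr)^{m_n}\,\lvert A\rvert^{\,\lvert F_n\rvert-m_n\lvert\dom w\rvert},
\]
and passing to the limit gives $h(\tau(A^G))<\log\lvert A\rvert=h(A^G)$.

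\noindent\textbf{Step B: $(3)\Rightarrow(2)$.} Contrapositively, suppose $\tau$ is not pre-injective, witnessed by $x\neq y$ that agree off a finite set $\Omega_0$ and have $\tau(x)=\tau(y)$. Set $\Omega:=\Omega_0SS$ (finite, and containing $\Omega_0$) and $u:=x{\vert}_\Omega$, $v:=y{\vert}_\Omega$; then $u\neq v$ while $u$ and $v$ agree on the ``frame'' $\Omega\setminus\Omega_0$. The point of enlarging $\Omega_0$ to $\Omega$ is that the frame is now thick enough to render $u$ and $v$ interchangeable inside any background: if a configuration $z$ has $z{\vert}_\Omega=u$, then replacing this by $v$ and changing nothing outside $\Omega$ does not alter $\tau(z)$, because any cell $g$ whose value could change satisfies $gS\cap\Omega_0\neq\varnothing$, hence $gS\subseteq\Omega_0SS=\Omega$, so on the window $gS$ one sees only $u$ (respectively $v$), which there coincides with $x$ (respectively $y$), and $\tau(x)(g)=\tau(y)(g)$. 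Now take $E=\Omega\Omega^{-1}$ in the packing estimate to obtain disjoint translates $g_1\Omega,\dots,g_{m_n}\Omega\subseteq F_n$ with $m_n\geq\alpha\lvert F_n\rvert$; the $m_n$ commuting involutions ``swap $g_i\cdot u$ with $g_i\cdot v$ on $g_i\Omega$'' each fix $\tau(z)$, so $z\mapsto\tau(z){\vert}_{F_n}$, which factors through $z{\vert}_{F_nS}$, is constant on their orbits, and counting orbits gives
\[
\bigl\lvert\tau(A^G){\vert}_{F_n}\bigr\rvert\leq\lvert A\rvert^{\,\lvert F_nS\rvert}\bigl(1-\lvert A\rvert^{-\lvert\Omega\rvert}\bigr)^{m_n};
\]
since $\lvert F_nS\rvert\sim\lvert F_n\rvert$ and $m_n\geq\alpha\lvert F_n\rvert$, this forces $h(\tau(A^G))<\log\lvert A\rvert$, contradicting $(3)$.

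\noindent\textbf{Step C: $(2)\Rightarrow(1)$.} Contrapositively, assume $\tau$ is not surjective. By Step A, $h(\tau(A^G))<\log\lvert A\rvert$, and since $\lbrace F_nS\rbrace$ is also a F\o lner sequence and $h$ does not depend on it, there are a rational $q<\log\lvert A\rvert$ and a constant $c$ with $\lvert\tau(A^G){\vert}_{F_nS}\rvert\leq c\,2^{q\lvert F_nS\rvert}$ for all $n$, so $c\,2^{q\lvert F_nS\rvert}<\lvert A\rvert^{\lvert F_n\rvert}$ once $n$ is large. Fix such an $n$ and a configuration $c_0$: the $\lvert A\rvert^{\lvert F_n\rvert}$ configurations agreeing with $c_0$ off $F_n$ all have images agreeing with $\tau(c_0)$ off $F_nS$ and, there, lying in $\tau(A^G){\vert}_{F_nS}$, a set of size $<\lvert A\rvert^{\lvert F_n\rvert}$; by the pigeonhole principle two of them, $z\neq z'$, have $\tau(z)=\tau(z')$, and since they differ only inside the finite set $F_n$ the automaton fails to be pre-injective.

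The step I expect to be the main obstacle is Step B: one must choose the auxiliary set $\Omega$ large enough that every cell whose value can change has its whole $S$-neighbourhood buried inside $\Omega$ --- so that the local surgery $u\leftrightarrow v$ is legitimate regardless of the ambient configuration --- and then convert the commuting involutions into an honest upper bound for $\bigl\lvert\tau(A^G){\vert}_{F_n}\bigr\rvert$ with all the $o(\lvert F_n\rvert)$ boundary contributions controlled. This is the Ornstein--Weiss-style bookkeeping carried out in detail in \cite{CAG}, which one could alternatively invoke directly.
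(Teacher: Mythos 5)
Your argument is correct, but it is worth noting that the paper does not actually prove this theorem at all: its ``proof'' is the single line ``See \cite{CAG} theorem 5.8.1 on page 128.'' What you have written is, in effect, a reconstruction of the standard Moore--Myhill-style proof for amenable groups that lives inside that reference. Your cycle $(1)\Rightarrow(3)\Rightarrow(2)\Rightarrow(1)$ is sound: Step A is the Moore-type counting (a forbidden pattern packed $\alpha\lvert F_n\rvert$ times forces an exponential entropy deficit); Step B is the Myhill-type counting, and your orbit count is exactly right --- the number of orbits of the $m_n$ commuting swaps on $A^{F_nS}$ is $\lvert A\rvert^{\lvert F_nS\rvert-m_n\lvert\Omega\rvert}(\lvert A\rvert^{\lvert\Omega\rvert}-1)^{m_n}$, which is your bound; the surgery justification via $gS\cap\Omega_0\neq\varnothing\Rightarrow gS\subseteq\Omega_0SS$ is the correct reason the frame $\Omega\setminus\Omega_0$ suffices; and Step C's pigeonhole is the standard derivation of non-pre-injectivity from an entropy gap. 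Two points deserve care if you were to write this out in full: (i) the paper defines $h$ as a $\limsup$ along one fixed F\o lner sequence, so your appeal in Step C to independence of the F\o lner sequence needs Ornstein--Weiss for subshifts (both $A^G$ and $\tau(A^G)$ are closed and shift-invariant, so this is legitimate, or you can simply rerun Step A along $\lbrace F_nS\rbrace$); (ii) the packing lemma and the $o(\lvert F_n\rvert)$ boundary bookkeeping, which you correctly identify as the technical heart, are exactly what \cite{CAG} Chapter 5 supplies. What your approach buys is a self-contained proof where the paper delegates entirely to the literature; what the citation buys the paper is brevity and a guarantee that the $\varepsilon$-management has been done once, carefully, elsewhere.
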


\begin{proof}
See \cite{CAG} theorem 5.8.1 on page 128.
\end{proof}

\begin{theor}[Curtis-Hedlund-Lyndon theorem]
Suppose $\tau$ is a map from $A^G$ to itself, there $G$ is a group and $A$ is a finite set. Then the following are equivalent:
\begin{enumerate}
\item $\tau$ is a cellular automaton.
\item $\tau$ is continuous and shift-invariant, that is for every $x \in A^G$ and $g \in G$ we have $\tau(gx)=g \tau(x)$. 
\end{enumerate}
\end{theor}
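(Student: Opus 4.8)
The plan is to prove the two implications of the equivalence separately: $(1)\Rightarrow(2)$ by directly unwinding the definition of a cellular automaton, and $(2)\Rightarrow(1)$ by a compactness argument.

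For $(1)\Rightarrow(2)$, suppose $\tau$ has memory set $S$ and local rule $\mu$, so that $\tau(x)(g)=\mu(g^{-1}x{\vert}_S)$. I would first check shift-invariance by a computation with the action $(gx)(g')=x(g^{-1}g')$: one verifies that $g^{-1}(hx)=(h^{-1}g)^{-1}x$ as elements of $A^G$, and substituting this into the defining formula gives $\tau(hx)(g)=\mu\big((h^{-1}g)^{-1}x{\vert}_S\big)=\tau(x)(h^{-1}g)=(h\tau(x))(g)$, so $\tau(hx)=h\tau(x)$. For continuity, I would note that for every finite $F\subset G$ the word $\tau(x){\vert}_F$ is completely determined by $x{\vert}_{FS}$, a restriction to a finite set; hence the $\tau$-preimage of any basic clopen set $U(w)$ is itself a union of basic clopen sets, and $\tau$ is continuous.

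For $(2)\Rightarrow(1)$, let $e\in G$ be the identity and define $\mu_0\colon A^G\to A$ by $\mu_0(x)=\tau(x)(e)$. Since $\tau$ is continuous and evaluation at $e$ is continuous, $\mu_0$ is continuous; as $A$ is finite and discrete, the fibres $\mu_0^{-1}(a)$, $a\in A$, form a finite clopen partition of $A^G$. By compactness each fibre is a finite union of basic clopen sets $U(w)$, and each such $U(w)$ depends only on the finitely many coordinates in $\dom w$; collecting all the domains that occur into one finite set $S\subset G$, we see that $\mu_0(x)$ depends only on $x{\vert}_S$, so it factors as $\mu_0(x)=\mu(x{\vert}_S)$ for a well-defined $\mu\colon A^S\to A$. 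Finally, using shift-invariance in the form $g^{-1}\tau(x)=\tau(g^{-1}x)$, I would compute $\tau(x)(g)=(g^{-1}\tau(x))(e)=\tau(g^{-1}x)(e)=\mu_0(g^{-1}x)=\mu(g^{-1}x{\vert}_S)$, which is exactly the defining equation of a cellular automaton with memory set $S$ and rule $\mu$.

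The main obstacle is the step that extracts the finite memory set $S$ from continuity alone: this is the only place where something beyond symbol-pushing is used, and it rests essentially on compactness of $A^G$ (Tychonoff's theorem), which is what forces a continuous $A$-valued function on $A^G$ to depend on only finitely many coordinates. The remaining difficulty is purely notational --- keeping the inverses in their correct positions while juggling the left action $(gx)(g')=x(g^{-1}g')$.
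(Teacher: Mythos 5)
Your proof is correct. The paper itself gives no argument here --- it simply cites Theorem 1.8.1 of Ceccherini-Silberstein and Coornaert --- and your two-implication proof (direct verification for $(1)\Rightarrow(2)$, and the compactness argument extracting a finite memory set from the clopen partition $\{\mu_0^{-1}(a)\}_{a\in A}$ for $(2)\Rightarrow(1)$) is exactly the standard proof found in that reference.
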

\begin{proof}
See \cite{CAG}, theorem 1.8.1 on page 20.
\end{proof}

\begin{prop}
 Suppose, that cellular automaton $\tau$ has a Garden of Eden configurations. Then for some $x \in
A^G$ we have $hc(\tau(x)) < hc(x)$.
\end{prop}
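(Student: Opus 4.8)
The plan is to derive the statement directly from the Garden of Eden theorem together with Propositions \ref{entropy and asymptotic complexity} and \ref{asymptotic complexity properties}. Since $\tau$ admits a Garden of Eden configuration, the Garden of Eden theorem gives $h(\tau(A^G)) \neq h(A^G)$, and since $h(A^G) = \log \lvert A \rvert$ while $h(\tau(A^G)) \leq \log \lvert A \rvert$ always holds, this forces the strict inequality $h(\tau(A^G)) < \log \lvert A \rvert$. The task is then to convert this gap in entropy into a gap in asymptotic complexity, and Proposition \ref{entropy and asymptotic complexity} does precisely that for effectively closed sets — so the one nontrivial preliminary is to verify that $\tau(A^G)$ is effectively closed.

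To see that $\tau(A^G)$ is effectively closed I would repeat, almost verbatim, the compactness argument from the proof of Proposition \ref{entropy and asymptotic complexity}. The set $\tau(A^G)$ is closed, being the continuous image of the compact space $A^G$. For a word $w$ with $\dom w = B$, I claim that $U(w) \cap \tau(A^G) = \varnothing$ if and only if there is no word $v$ with $\dom v = BS$ such that $\tau(v) = w$: one direction uses that $\tau(x)\vert_B = \tau(x\vert_{BS})$ depends only on $x\vert_{BS}$, and the other uses that any word with domain $BS$ extends to a configuration of $A^G$. Since $\tau$ is computable on words and, for a fixed $B$, there are only finitely many words with domain $BS$, this condition is decidable; hence the set $W$ of such forbidden words is enumerable, and $A^G \setminus \tau(A^G) = \bigcup_{w \in W} U(w)$.

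With this in hand the conclusion follows quickly. Applying Proposition \ref{entropy and asymptotic complexity} to $X = \tau(A^G)$ gives $hc(y) \leq h(\tau(A^G)) < \log \lvert A \rvert$ for every $y \in \tau(A^G)$, so in particular $hc(\tau(x)) < \log \lvert A \rvert$ for every $x \in A^G$. On the other hand, by Proposition \ref{asymptotic complexity properties}(2) the set $\{ x \in A^G : hc(x) = \log \lvert A \rvert \}$ has full Bernoulli measure and is therefore nonempty; fixing any such $x$ yields $hc(\tau(x)) < \log \lvert A \rvert = hc(x)$. The main obstacle is the effective-closedness of $\tau(A^G)$, but since it essentially recycles the argument already made for Proposition \ref{entropy and asymptotic complexity}, I expect it to be routine rather than genuinely difficult.
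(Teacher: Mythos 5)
Your proposal is correct and follows essentially the same route as the paper: Garden of Eden theorem gives $h(\tau(A^G)) < h(A^G) = \log\lvert A\rvert$, effective closedness of $\tau(A^G)$ lets you apply Proposition \ref{entropy and asymptotic complexity}, and Proposition \ref{asymptotic complexity properties}(2) supplies a configuration of maximal asymptotic complexity. Your verification that $\tau(A^G)$ is effectively closed is in fact spelled out more carefully than the paper's one-line remark that $U(v)\cap\tau(A^G)=\varnothing$ is equivalent to $\tau^{-1}(v)=\varnothing$.
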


\begin{proof}
By the Garden of Eden theorem, if there exists a Garden of Eden configuration, then $h(\tau(A^G)) < h(A^G)$. By the proposition \ref{asymptotic complexity properties} there exist an element $x \in A^G$ with
$hc(x)=\log \lvert A \rvert$. The set $\tau(A^G)$ is a closed subset of the set $A^G$, because it is image of the compact set under the action of the continuous map. 
Since the statement $U(v)\cap \tau(A^G)=\varnothing$ is equivalent to ${\tau}^{-1}(v)=\varnothing$, we have, that the set $\tau(A^G)$ is effectively closed. So, by the proposition \ref{entropy and asymptotic complexity}, we have, that $hc(\tau(x)) \leq h(\tau(A^G)) < \log \lvert A \rvert = hc(x)$. 
\end{proof}

\begin{prop}
 Suppose we have $x \in A^G$, such that $hc(\tau(x)) < hc(x)$. Then the cellular automaton is not pre-injective. 
\end{prop}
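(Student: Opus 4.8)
\section{Proof proposal}

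The plan is to establish the contrapositive: if $\tau$ is pre-injective, then $hc(x)\le hc(\tau(x))$ for every $x\in A^G$ (which, together with part (3) of proposition \ref{asymptotic complexity properties}, even upgrades to equality). Throughout I would fix a memory set $S$ and a rule $\mu$ for $\tau$; after enlarging $S$ --- which does not change $\tau$ --- I may assume $1\in S$ and $S=S^{-1}$.

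The heart of the argument is a reconstruction scheme. Fix $x\in A^G$ and $n$. For $u\in A^{F_n}$ let $x_u\in A^G$ be the configuration agreeing with $x$ off $F_n$ and equal to $u$ on $F_n$, and set $\Phi_n(u)=\tau(x_u){\vert}_{F_nS}\in A^{F_nS}$. First I would check that pre-injectivity makes $\Phi_n$ injective: for $u\ne u'$ the configurations $x_u,x_{u'}$ agree off the finite set $F_n$, so pre-injectivity forces $\tau(x_u)\ne\tau(x_{u'})$; but since $x_u,x_{u'}$ agree off $F_n$ and $\tau$ has memory set $S=S^{-1}$, the images $\tau(x_u)$ and $\tau(x_{u'})$ agree off $F_nS$, hence they must already differ on $F_nS$, i.e.\ $\Phi_n(u)\ne\Phi_n(u')$. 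Next I would note that $\Phi_n$ is computable from $F_n$, the fixed rule $\mu$, and the word $x{\vert}_{F_nSS\setminus F_n}$: indeed $\Phi_n(u)$ depends only on $x_u{\vert}_{F_nSS}$, and (using $1\in S$, so $F_n\subseteq F_nSS$) this is determined by $u$ together with $x{\vert}_{F_nSS\setminus F_n}$. An injective computable map is invertible by brute-force search, so $x{\vert}_{F_n}=\Phi_n^{-1}\big(\tau(x){\vert}_{F_nS}\big)$ is computable from the triple $\big(\tau(x){\vert}_{F_nS},\ F_n,\ x{\vert}_{F_nSS\setminus F_n}\big)$.

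From here the conclusion follows by routine complexity bookkeeping. Proposition \ref{complexity properties} turns the reconstruction into
$$C(x{\vert}_{F_n})\le C(\tau(x){\vert}_{F_nS})+C(F_n)+C(x{\vert}_{F_nSS\setminus F_n})+O(\log|F_n|),$$
and then I would estimate the error terms one by one: $C(F_n)=o(|F_n|)$ by modesty of the F\o lner sequence; $C(x{\vert}_{F_nSS\setminus F_n})\le |F_nSS\setminus F_n|\log|A|+C(F_n)+O(\log|F_n|)=o(|F_n|)$, using proposition \ref{folner reformulated} to see $|F_nSS\setminus F_n|=o(|F_n|)$; and $C(\tau(x){\vert}_{F_nS})\le C(\tau(x){\vert}_{F_n})+o(|F_n|)$ by proposition \ref{complexity difference}, together with $|F_nS\,\Delta\,F_n|=o(|F_n|)$ and $C(F_nS)\le C(F_n)+O(1)=o(|F_n|)$. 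Combining gives $C(x{\vert}_{F_n})\le C(\tau(x){\vert}_{F_n})+o(|F_n|)$; dividing by $|F_n|$ and passing to $\limsup$ yields $hc(x)\le hc(\tau(x))$, so a configuration with $hc(\tau(x))<hc(x)$ certifies that $\tau$ is not pre-injective.

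The only genuinely non-routine point --- hence the step I expect to be the main obstacle --- is the injectivity of $\Phi_n$ and, tied to it, the observation that the auxiliary data needed to rebuild $\Phi_n$ lives on the thin boundary shell $F_nSS\setminus F_n$, of size $o(|F_n|)$, rather than on all of $G$; once that is in place, everything reduces to the inequalities of proposition \ref{complexity properties} and the F\o lner property.
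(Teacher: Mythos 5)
Your proof is correct, but it takes a genuinely different route from the paper's. The paper argues directly: from $hc(\tau(x))<hc(x)$ it extracts a subsequence with a definite complexity gap, feeds that gap into proposition \ref{inverse complexity} to conclude that the word $\tau(x){\vert}_{F_{n_k}}$ has at least $2^{c\lvert F_{n_k}\rvert}$ preimage words, and then applies a pigeonhole count --- only $\lvert A\rvert^{o(\lvert F_{n_k}\rvert)}$ ways to fill the boundary shell --- to produce two preimages agreeing off $F_{n_k}$, which violates pre-injectivity. You instead prove the contrapositive: pre-injectivity makes the local map $\Phi_n\colon u\mapsto \tau(x_u){\vert}_{F_nS}$ injective, hence computably invertible by brute force given the $o(\lvert F_n\rvert)$-sized boundary data, and the resulting reconstruction of $x{\vert}_{F_n}$ from $\tau(x){\vert}_{F_nS}$ yields $C(x{\vert}_{F_n})\le C(\tau(x){\vert}_{F_n})+o(\lvert F_n\rvert)$ directly. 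Your injectivity argument (images agree off $F_nS$ because the automaton has memory set $S$, so distinct images must differ on $F_nS$) and the error-term bookkeeping are all sound. What the paper's route buys is an explicit pair of configurations witnessing the failure of pre-injectivity, in close analogy with the entropy-counting proof of the Garden of Eden theorem; what your route buys is the avoidance of both proposition \ref{inverse complexity} and the pigeonhole step, and it delivers the inequality $hc(x)\le hc(\tau(x))$ for \emph{all} $x$ under pre-injectivity, which together with proposition \ref{asymptotic complexity properties}(3) gives statement (4) of theorem \ref{main} in one stroke.
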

\begin{proof}
Without loss of generality we may assume, that $S=S^{-1}$, and that $S$ contains the identity.
Since $hc(\tau(x)) < hc(x)$, there is a subsequence $\lbrace n_k \rbrace$ and two constants $a<b$,
such that 
\[
\frac{C(\tau (x) {\vert}_{F_{n_k}})}{\lvert F_{n_k}\rvert} < a < b
< \frac{C (x {\vert}_{F_{n_k}})}{\lvert F_{n_k}\rvert}  
\]
Using the proposition \ref{complexity difference} we could assume(maybe, passing to subsequence and slightly modifying constants $a$ and $b$), that 
\[
\frac{C(\tau(x){\vert}_{F_{n_k}})}{\lvert F_{n_k}\rvert} < a' < b' < \frac{C( x {\vert}_{F_{n_k}
S})}{\lvert F_{n_k}\rvert}
\]
then, since 
\[
\tau(x {\vert}_{F_{n_k} S})= \tau(x) {\vert}_{F_{n_k}}
\]
we could use the proposition \ref{inverse complexity} and denoting $y=\tau(x)$ we get
\[
\log \lvert {\tau}^{-1}(y{\vert}_{F_{n_k}})\rvert \geq 
C(x{\vert}_{F_n S}) - C(y{\vert}_{F_n}) - O(\log \lvert C(x{\vert}_{F_n S}) - C(y{\vert}_{F_n})  \rvert)
\]
which imply, that for some big enough $N$ and positive constant $c$, for every $k>N$ the following holds:
\[
\log \lvert {\tau}^{-1}(y{\vert}_{F_{n_k}})\rvert \geq c \cdot \lvert F_{n_k} \rvert
\]
so
\[
\lvert {\tau}^{-1}(y{\vert}_{F_{n_k}})\rvert \geq 2^{c \lvert F_{n_k}\rvert}
\]
Consider a set $T$ of pairs of words $(v, w)$, there $f(v)=w$, $\dom (w) = F_{n_k} S S$ and $v{\vert}_{F_{n_K}}=\tau(x){\vert}_{F_{n_k}}$. Its cardinality is at least $2^{c \lvert F_{n_k}\rvert}$ for $k>N$.  
That is, for big enough $k$, by the pigeonhole principle there exists at least two pairs $(v_1,w_2)$ and
$(v_2,w_2)$ in $T$, such that $v_1$ and $v_2$ coincides outside the the set $F_{n_k}$, and $w_1=w_2$.
Really, $w_1$ and $w_2$ coincides on $F_{n_k}$, and we have ${\lvert A
\rvert}^{o(\lvert F_{n_k}\rvert)}$ variants for the filling $F_{n_k} S S \setminus
F_{n_k}$ and $F_{n_k} S \setminus F_{n_k}$ by elements of $A$. 

Extending $w_1$ to some $x_1 \in A^G$ and $w_2$ to some $x_2 \in A^G$ in such a way, that they will coincide outside $F_{n_k}$, we will get, that automaton is not pre-injective, since $\tau(x_1)=\tau(x_2)$.

\end{proof}

Combining last two propositions and the Garden of Eden theorem, we get the following result 
\begin{theor}\label{main}
 Suppose, that $G$ is an infinite computable amenable group, a cellular automaton $\tau$ defined over it, and a modest F\o lner sequence selected(that is, such a F\o lner sequence, that $C(F_n)=o(\lvert F_n \rvert)$). Then the following are equivalent 
\begin{enumerate}
 \item $\tau$ has no Garden of Eden configurations
 \item $\tau$ is pre-injective
 \item $h(\tau(A^G))=h(A^G)$
 \item $hc(\tau(x))=hc(x)$ for every $x \in A^G$.
\end{enumerate}
 
\end{theor}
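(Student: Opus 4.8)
The plan is to assemble the four-way equivalence from the ingredients already in place, treating the Garden of Eden theorem as a black box that links (1), (2) and (3). The two nontrivial bridges to statement (4) are exactly the two propositions that immediately precede the theorem: one shows that existence of a Garden of Eden configuration forces a strict drop $hc(\tau(x)) < hc(x)$ for some $x$, and the other shows that any such strict drop forces failure of pre-injectivity. So the cyclic structure I would use is: $(3) \Rightarrow (4)$ via Proposition \ref{entropy and asymptotic complexity} together with Proposition \ref{asymptotic complexity properties}(3); $\neg(1) \Rightarrow \neg(4)$ via the first of the two preceding propositions; $\neg(4) \Rightarrow \neg(2)$ via the second; and then close the loop through the Garden of Eden theorem, which gives $(1) \Leftrightarrow (2) \Leftrightarrow (3)$ outright.

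First I would establish $(3) \Rightarrow (4)$. By Proposition \ref{asymptotic complexity properties}(3) we always have $hc(\tau(x)) \leq hc(x)$, so only the reverse inequality needs argument. Assuming $h(\tau(A^G)) = h(A^G) = \log\lvert A\rvert$, I observe (as in the proof of the last-but-two proposition) that $\tau(A^G)$ is effectively closed, since $U(v) \cap \tau(A^G) = \varnothing$ is equivalent to $\tau^{-1}(v) = \varnothing$, a semidecidable condition by compactness. Then Proposition \ref{entropy and asymptotic complexity} gives $hc(\tau(x)) \leq h(\tau(A^G))$ for every $x$. This bound on its own is not enough to conclude $hc(\tau(x)) = hc(x)$ for a particular $x$; what I actually want is: for every $y$ in the image, $hc(y) \leq h(\tau(A^G)) = \log\lvert A\rvert$, and this is an equality for a measure-one set of $y$ — but that does not directly pin down $hc(\tau(x))$ versus $hc(x)$ for each individual $x$. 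The clean route is instead contrapositive: if $hc(\tau(x)) < hc(x)$ for some $x$, then by the last proposition $\tau$ is not pre-injective, hence by the Garden of Eden theorem $h(\tau(A^G)) < h(A^G)$, contradicting (3). So I would actually phrase the implication as $(3) \Rightarrow (4)$ by contraposition through (2).

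Thus the cleanest write-up runs as follows. The Garden of Eden theorem gives $(1) \Leftrightarrow (2) \Leftrightarrow (3)$. The last-but-one proposition gives $\neg(1) \Rightarrow \neg(4)$ (some $x$ has $hc(\tau(x)) < hc(x)$). The last proposition gives $\neg(4) \Rightarrow \neg(2)$. Combining, $\neg(4) \Rightarrow \neg(2) \Rightarrow \neg(1) \Rightarrow \neg(4)$, so $\neg(4)$ is equivalent to $\neg(1)$, whence (4) is equivalent to (1), and the four statements are all equivalent. I would write this out in two or three sentences, citing the three results by name.

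The main obstacle is not in this final assembly, which is essentially bookkeeping once the two preceding propositions are granted; the real work — and the only genuinely new content — lives in those two propositions, in particular the pigeonhole argument producing two configurations that agree off a F\o{}lner set with the same image, where one must control the $\lvert A\rvert^{o(\lvert F_{n_k}\rvert)}$ boundary corrections against the exponential lower bound $2^{c\lvert F_{n_k}\rvert}$ on the fiber size coming from Proposition \ref{inverse complexity}. In the theorem's own proof there is nothing harder than invoking those results in the right order and noting the equivalence of effective closedness of $\tau(A^G)$ with semidecidability of $\tau^{-1}(v) = \varnothing$.
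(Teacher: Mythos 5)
Your proposal is correct and matches the paper's own (very terse) proof, which simply combines the Garden of Eden theorem with the two immediately preceding propositions in exactly the cyclic arrangement you describe: $\neg(4)\Rightarrow\neg(2)$, $\neg(1)\Rightarrow\neg(4)$, and $(1)\Leftrightarrow(2)\Leftrightarrow(3)$ from the Garden of Eden theorem, using Proposition \ref{asymptotic complexity properties}(3) to identify $\neg(4)$ with the existence of a strict drop. Your observation that the direct route $(3)\Rightarrow(4)$ via Proposition \ref{entropy and asymptotic complexity} alone does not suffice, and that one must go through the contrapositive, is a sound reading of the situation.
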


\end{document}